\documentclass[lettersize,journal]{IEEEtran}
\usepackage{amsmath,amsfonts}

\usepackage{algorithm}
\usepackage{array}
\usepackage[caption=false,font=normalsize,labelfont=sf,textfont=sf]{subfig}
\usepackage{textcomp}
\usepackage{stfloats}
\usepackage{url}
\usepackage{verbatim}
\usepackage{graphicx}
\usepackage{cite}
\usepackage{booktabs}
\usepackage{amsmath}
\usepackage{amssymb}
\usepackage{amsthm}
\newtheorem{definition}{Definition}
\newtheorem{proposition}{Proposition}
\newtheorem{theorem}{Theorem}

\usepackage{booktabs}
\usepackage{tabularx}
\usepackage{graphicx}
\usepackage{algpseudocode}
\usepackage{changepage}
\begin{document}

\title{Optimal Sampling and Reconstruction of Graph Signals in the Fractional Fourier Domain}

\author{Xiaopeng~Cheng, Zhichao~Zhang,~\IEEEmembership{Member,~IEEE}, and Yangfan~He
	\thanks{This work was supported in part by the Open Foundation of Hubei Key Laboratory of Applied Mathematics (Hubei University) under Grant HBAM202404; in part by the Foundation of Key Laboratory of System Control and Information Processing, Ministry of Education under Grant Scip20240121; in part by the Startup Foundation for Introducing Talent of Nanjing Institute of Technology under Grant YKJ202214; and in part by the Open Foundation of Jiangsu Province Engineering Research Center of IntelliSense Technology and System under Grant ITS202502. \emph{(Corresponding author: Zhichao~Zhang.)}}
	\thanks{Xiaopeng~Cheng is with the School of Mathematics and Statistics, Nanjing University of Information Science and Technology, Nanjing 210044, China (e-mail: cxp\_class666@163.com).}
	\thanks{Zhichao~Zhang is with the School of Mathematics and Statistics, Nanjing University of Information Science and Technology, Nanjing 210044, China, with the Hubei Key Laboratory of Applied Mathematics, Hubei University, Wuhan 430062, China, and also with the Key Laboratory of System Control and Information Processing, Ministry of Education, Shanghai Jiao Tong University, Shanghai 200240, China (e-mail: zzc910731@163.com).}
	\thanks{Yangfan~He is with the School of Communication and Artificial Intelligence, School of Integrated Circuits, Nanjing Institute of Technology, Nanjing 211167, China, and also with the Jiangsu Province Engineering Research Center of IntelliSense Technology and System, Nanjing 211167, China (e-mail: Yangfan.He@njit.edu.cn).}}

\markboth{}%
{Shell \MakeLowercase{\textit{et al.}}: A Sample Article Using IEEEtran.cls for IEEE Journals}

\IEEEpubid{}

\maketitle

\begin{abstract}
Graph signal sampling and reconstruction are commonly formulated in the graph Fourier transform (GFT) domain. However, the reconstruction performance may be limited when practical graph signals are not sufficiently concentrated in the GFT spectrum. To address this issue, this paper proposes a graph signal sampling and reconstruction framework based on the graph fractional Fourier transform (GFRFT) domain. The fractional order is introduced as an adjustable spectral domain parameter, and the optimal order is selected to provide a more suitable representation domain for a given graph signal and sampling model. Under a unified sampling reconstruction formulation, subspace, smoothness, and stochastic priors are incorporated, and both unconstrained and predefined reconstruction mechanisms are considered, leading to several fractional domain sampling and reconstruction methods. Furthermore, the theoretical analysis shows that the optimal GFRFT domain can provide a more suitable low-dimensional spectral representation by improving energy concentration and reducing projection residual. The effects of residual leakage and noise amplification are further considered to explain how this representation advantage is translated into reconstruction error reduction. Experimental results show that, GFRFT domain sampling and reconstruction generally achieve better recovery performance than GFT domain methods.
\end{abstract}

\begin{IEEEkeywords}
Graph signal processing, graph signal sampling, graph signal reconstruction, graph fractional Fourier transform, graph signal priors.
\end{IEEEkeywords}

\section{Introduction}
\IEEEPARstart{G}{raph} signal processing (GSP) provides a signal analysis framework for data defined on graph structures~\cite{1,2,3,4,5,6}. Its main idea is to extend the representation, transform, filtering, and analysis methods in traditional signal processing (SP) to graph domains. Different from SP, a graph signal is defined on the nodes of a graph, and the graph structure is used to describe the relationships among data~\cite{4,5,6}. In recent years, with the aid of tools such as the graph Fourier transform (GFT)~\cite{7,8,9} and graph spectral analysis~\cite{10}, GSP has provided a unified processing method for complex network data and has gradually formed a relatively systematic theoretical framework covering representation, transform, filtering~\cite{10,11}, neural networks~\cite{12,13}, sampling and reconstruction~\cite{11,14,15,16,17,18,19}.

Among the various research topics in GSP, graph signal sampling and reconstruction is one of the core problems~\cite{15,16,17,18}. Its objective is to effectively recover the complete graph signal from only partial node observations by incorporating graph structural information and signal priors. This problem is of great significance in sensor data completion, traffic state estimation, network monitoring, and distributed information processing~\cite{22,23,24}. Existing related methods are mostly developed in the vertex domain or the GFT domain~\cite{20,21}, and rely on assumptions such as bandlimitedness and prior information to achieve signal recovery under certain conditions. Although these methods have laid an important foundation for graph signal sampling theory, the GFT is fixed by the graph structure, and its spectral basis lacks adjustability. For practical graph signals, their energy may not always be sufficiently concentrated in the GFT domain. When the spectral energy distribution of a signal is relatively dispersed, large spectral truncation errors or reconstruction errors may occur under limited sampling conditions, thereby affecting the recovery performance.

As a generalization of the GFT, the graph fractional Fourier transform (GFRFT) introduces a fractional order parameter and provides a class of continuously adjustable representation domains for graph signals~\cite{25,26,27,28,29}. When the fractional order takes a specific value, the GFRFT can reduce to the GFT. For other fractional orders, the same graph signal may exhibit different spectral coefficient distributions and low-dimensional energy concentration characteristics. Therefore, the fractional order can be regarded as an adjustable spectral domain parameter~\cite{291,292}. By searching for the optimal fractional order, a more suitable spectral representation domain can be selected for a given graph signal, sampling condition, and reconstruction model, which provides a new idea for improving graph signal reconstruction performance under limited sampling conditions.

In recent years, graph signal sampling and reconstruction have been extensively studied. Early works mainly focused on bandlimited graph signals, aiming to extend the Shannon-Nyquist sampling principle to graph signal processing and to investigate sampling set selection, recovery conditions, and algorithm design~\cite{30}. However, due to the irregularity of graph structures, the notions of frequency and bandlimitedness in the graph domain are fundamentally different from those in classical signal processing, which brings new challenges to both theoretical modeling and practical implementation. To overcome the limitation of the strict bandlimited assumption, generalized sampling theory (GST) has been introduced into graph signal processing to handle more general non-bandlimited graph signals~\cite{31,32,33,34,35,36}. Representative studies established GST frameworks in the graph frequency domain by extending the GST theory for shift-invariant subspaces to the graph setting~\cite{37}. In these frameworks, subspace and smoothness priors were considered, enabling GST to go beyond strictly bandlimited graph signals. Subsequent studies further extended this line of research in different directions. On the one hand, stochastic priors were incorporated into GST, where correction filters were designed by minimizing the mean-square error~\cite{38}. On the other hand, sampling operator design in the vertex domain was also investigated~\cite{39}, with vertex-wise sampling being extended to flexible sampling to improve the robustness of sampling and recovery under arbitrary priors. These studies show that GST has evolved from deterministic bandlimited models to more general formulations involving subspace, smoothness, and stochastic priors\cite{34,39,40}. Nevertheless, most existing methods are still formulated in the GFT domain or the vertex domain, leaving room for further investigation of graph signal sampling in more flexible representation domains.

On this basis, the GFRFT has also been introduced into graph signal sampling and reconstruction. Early studies on GFRFT domain sampling mainly focused on fractional bandlimited graph signals. Wang and Li studied sampling and recovery under the GFRFT and showed that \(a\)-bandlimited graph signals in the GFRFT domain can be perfectly recovered~\cite{25}. They also designed fractional sampling operators and demonstrated that fractional sampling may achieve better classification performance at an optimal fractional order than GFT sampling. Later, GST ideas were further extended to the GFRFT domain. Since most real graph signals are not strictly bandlimited in the GFRFT domain, Wei and Yan proposed a GST framework based on prior information in the GFRFT domain, where sampling and reconstruction are no longer restricted by the fractional domain bandwidth~\cite{40}. Their framework considers both subspace prior and smoothness prior, enabling the recovery of graph signals with or without bandlimitedness in the GFRFT domain.

However, existing studies have mainly focused on the construction and extension of GFRFT domain sampling and reconstruction models~\cite{25,27,40}, while systematic discussions on how to select the fractional order and whether the optimal fractional domain can outperform the GFT domain under different sampling and reconstruction conditions remain insufficient. In particular, the influence of fractional order variations on reconstruction errors, the performance of different priors and reconstruction mechanisms in the fractional domain, and the applicability of related methods to real graph data still require further investigation. Based on this consideration, this paper regards the fractional order as an adjustable spectral domain parameter and searches for the optimal order to select a more suitable spectral representation space for different graph signals. On this basis, a unified sampling-correction-reconstruction framework is further constructed, and the reconstruction advantages of the GFRFT domain over the GFT domain are analyzed from both theoretical and experimental perspectives.

The main contributions of this paper are summarized as follows.

\begin{itemize}
	\item This paper introduces an optimal GFRFT domain selection strategy by treating the fractional order as an adjustable spectral domain parameter, thereby selecting a more suitable representation space for graph signal reconstruction under limited sampling conditions.
	
	\item This paper develops a unified GFRFT domain sampling-correction-reconstruction framework by incorporating subspace, smoothness, and stochastic priors under both unconstrained and predefined reconstruction settings.
	
	\item This paper provides a theoretical interpretation from the perspectives of fractional spectral subspace, energy concentration, projection residual, residual leakage, and noise amplification, explaining the reconstruction advantage of the fractional domain representation.
\end{itemize}

The remainder of this paper is organized as follows. Section~II introduces the preliminaries of graph signals, the GFT, the GFRFT, and graph signal priors. Section~III develops the GFRFT domain sampling reconstruction framework under different priors. Section~IV analyzes the mechanism of GFRFT domain reconstruction from the perspectives of fractional spectral subspace, energy concentration, projection residual, residual leakage, and noise amplification. Section~V presents the experimental results on simulated graph signals and real traffic data. Section~VI concludes this paper.
\section{Preliminaries}
\subsection{GFT}

Let $\mathcal{G}=(\mathcal{V},\mathcal{E},\mathbf{A})$ be an undirected weighted graph with $N$ nodes, where $\mathbf{A}\in\mathbb{R}^{N\times N}$ denotes the adjacency matrix. A graph signal is defined as
\begin{equation*}
	\mathbf{x}=[x_1,x_2,\ldots,x_N]^\mathrm{T}\in\mathbb{R}^{N},
\end{equation*}
where $x_i$ is the signal value on the $i$-th node.

The combinatorial graph Laplacian is defined as
\begin{equation*}
	\mathbf{L}=\mathbf{D}-\mathbf{A},
\end{equation*}
where $\mathbf{D}$ is the degree matrix. Since $\mathbf{L}$ is symmetric for an undirected graph, it can be decomposed as
\begin{equation*}
	\mathbf{L}=\mathbf{U}\mathbf{\Lambda}\mathbf{U}^{\mathrm{T}},
\end{equation*}
where $\mathbf{U}$ contains the orthonormal eigenvectors and $\mathbf{\Lambda}$ is the diagonal eigenvalue matrix.

The GFT and its inverse are defined as
\begin{equation*}
	\widehat{\mathbf{x}}=\mathbf{U}^{\mathrm{T}}\mathbf{x},
\end{equation*}
\begin{equation*}
	\mathbf{x}=\mathbf{U}\widehat{\mathbf{x}},
\end{equation*}
where $\hat{\mathbf{x}}$ denotes the graph spectral coefficients. In this representation, the eigenvectors of $\mathbf{L}$ form the graph Fourier basis, and the eigenvalues correspond to graph frequencies.

\subsection{GFRFT}
Let the GFT matrix be
\begin{equation*}
	\mathbf{F}=\mathbf{U}^{\mathrm{T}}.
\end{equation*}
Assume that $\mathbf{F}$ is diagonalizable as
\begin{equation*}
	\mathbf{F}=\mathbf{V}\mathbf{\Lambda}_{F}\mathbf{V}^{-1},
\end{equation*}
where $\mathbf{\Lambda}_{F}$ contains the eigenvalues of $\mathbf{F}$. The GFRFT matrix with fractional order $a$ is defined as
\begin{equation}
	\mathbf{F}^{a}=\mathbf{V}\mathbf{\Lambda}_{F}^{a}\mathbf{V}^{-1},
\end{equation}
where
\begin{equation*}
	\mathbf{\Lambda}_{F}^{a}
	=
	\mathrm{diag}\left(\lambda_{F,1}^{a},\lambda_{F,2}^{a},\ldots,\lambda_{F,N}^{a}\right).
\end{equation*}

For a graph signal $\mathbf{x}$, the GFRFT and its inverse are given by
\begin{equation}
	\label{eq:xa_estimate}
	\widehat{\mathbf{x}}_{a}=\mathbf{F}^{a}\mathbf{x},
\end{equation}
\begin{equation}
	\mathbf{x}=\mathbf{F}^{-a}\widehat{\mathbf{x}}_{a}.
\end{equation}
When $a=1$, the GFRFT reduces to the conventional GFT, while different values of $a$ define different fractional graph spectral domains. Therefore, the fractional order provides an adjustable parameter for selecting a suitable transform domain in graph signal reconstruction.

\subsection{Prior Assumptions for Graph Signal Reconstruction}
\subsubsection{Subspace Prior}

The subspace prior assumes that a graph signal can be represented by a known low dimensional generation space~\cite{35,39,40}. For a graph signal $\mathbf{x}\in\mathbb{R}^{N}$, this prior can be written as
\begin{equation}
	\mathbf{x}=\mathbf{M}\mathbf{d},
\end{equation}
where $\mathbf{M}\in\mathbb{R}^{N\times K}$ is a known generator matrix, $\mathbf{d}\in\mathbb{R}^{K}$ is the expansion coefficient vector, and $K\leq N$. This assumption indicates that although the graph signal lies in an $N$-dimensional space, its effective degrees of freedom can be described by a smaller number of coefficients. Therefore, the reconstruction of the graph signal is converted into the estimation of a low dimensional coefficient vector. The key requirement of this prior is that the generation space should be specified in advance, which makes it a relatively strong structural assumption.

\subsubsection{Smoothness Prior}

The smoothness prior assumes that the graph signal has bounded energy under a certain variation measure~\cite{35,39,40}. Unlike the subspace prior, it does not require the signal to belong to a known low dimensional subspace. Instead, it constrains the overall variation of the signal, which can be expressed as
\begin{equation}
	\|\mathbf{T}\mathbf{x}\|_2^2 \leq \rho^2,
\end{equation}
where $\mathbf{N}$ is an operator used to measure the signal variation, and $\rho$ is a positive constant controlling the variation range. This prior focuses on the boundedness of signal variation rather than the exact form of the signal generation space. Therefore, it provides a weaker structural constraint when the underlying signal subspace is difficult to determine. By suppressing unreasonable variations, the smoothness prior encourages the reconstructed signal to follow a prescribed variation pattern.

\subsubsection{Stochastic Prior}

The stochastic prior models the graph signal as a random vector or random process characterized by statistical quantities~\cite{39}. A common assumption is that the graph signal has a given mean and covariance, for example,
\begin{equation}
	\mathbf{x}\sim \mathcal{N}(\mathbf{0},\boldsymbol{\Gamma}_x),
\end{equation}
where $\boldsymbol{\Gamma}_x$ is the covariance matrix of the graph signal. It describes the overall properties of a class of graph signals through statistical characteristics. These characteristics may include the mean structure, correlation structure, covariance structure, or spectral energy distribution. The essential assumption of the stochastic prior is that graph signals are not arbitrary, but exhibit stable or describable behavior in a statistical sense. Under this prior, reconstruction usually relies on second order statistical information to characterize the correlations among different nodes or spectral components.

\section{Proposed GFRFT Domain Sampling and Reconstruction Framework}
\label{sec:proposed_framework}

This section presents a graph signal sampling and reconstruction framework in the GFRFT domain. Conventional methods are usually formulated in the GFT domain, where the spectral basis is fixed by the graph structure. When practical graph signals are not sufficiently concentrated in the GFT spectrum, large truncation and reconstruction errors may occur under limited sampling conditions. To improve the flexibility of spectral representation, the proposed framework introduces the fractional order into the sampling and reconstruction process, enabling the search for a transform domain that better matches the structure of the current graph signal.

\subsection{Baseline Sampling Reconstruction Model}
\label{subsec:baseline_sampling_reconstruction}

We first introduce the baseline graph signal sampling and reconstruction model in the GFT domain~\cite{35,39}. Let \(\mathbf{x}\in\mathbb{C}^{N}\) be the original graph signal and \(\mathbf{F}\) be the GFT matrix. In the spectral sampling process, the graph signal is first transformed into the GFT domain, then weighted by a sampling filter \(\mathbf{S}(\mathbf{G})\), and finally reduced to \(K\) samples by the downsampling operator \(\mathbf{D}_s\in\mathbb{C}^{K\times N}\). The baseline sampling operator is written as
\begin{equation}
	\mathbf{S}_1^{\ast}
	=
	\mathbf{D}_s
	\mathbf{S}(\mathbf{G})
	\mathbf{F}.
	\label{eq:baseline_sampling_operator}
\end{equation}
The sampled signal is then given by
\begin{equation}
	\mathbf{y}
	=
	\mathbf{S}_1^{\ast}
	\mathbf{x}
	+
	\boldsymbol{\eta},
	\label{eq:baseline_sampled_signal}
\end{equation}
where \(\boldsymbol{\eta}\) denotes measurement noise or sampling perturbation. To recover the original graph signal from the low-dimensional observation, a correction operator \(\mathbf{H}_1(\mathbf{G})\) and a reconstruction operator \(\mathbf{W}_1\) are introduced. The recovered signal is expressed as
\begin{equation}
	\widetilde{\mathbf{x}}_1
	=
	\mathbf{W}_1
	\mathbf{H}_1(\mathbf{G})
	\mathbf{S}^{\ast}
	\mathbf{x}
	+
	\boldsymbol{\eta}.
	\label{eq:baseline_sampling_reconstruction_model}
\end{equation}
This model describes the basic sampling-correction-reconstruction chain, where \(\mathbf{S}_1^{\ast}\) determines how the graph signal is sampled, \(\mathbf{H}_1(\mathbf{G})\) compensates for non-ideal sampling effects, and \(\mathbf{W}_1\) maps the corrected samples back to the vertex domain.

\subsection{Proposed GFRFT Domain Sampling Reconstruction Model}
\label{subsec:gfrft_sampling_reconstruction}

Based on the baseline model, we extend the sampling and reconstruction process from the GFT domain to the GFRFT domain. For a given fractional order \(a\), the graph signal is represented by \(\mathbf{F}^{a}\), and the fractional domain sampling filter is defined as
\begin{equation*}
	\mathbf{S}_a(\mathbf{G})
	=
	\operatorname{diag}
	\left(
	S_a(g_0),
	S_a(g_1),
	\ldots,
	S_a(g_{N-1})
	\right),
	\label{eq:gfrft_sampling_filter}
\end{equation*}
where \(g_i\) denotes the \(i\)-th fractional spectral component. With the downsampling operator \(\mathbf{D}_s\in\mathbb{C}^{K\times N}\), the sampling operator and the sampled signal are given by
\begin{equation}
	\mathbf{S}_a^{\ast}
	=
	\mathbf{D}_s
	\mathbf{S}_a(\mathbf{G})
	\mathbf{F}^{a},
	\label{eq:gfrft_sampling_operator}
\end{equation}
and
\begin{equation}
	\mathbf{y}_a
	=
	\mathbf{S}_a^{\ast}
	\mathbf{x}
	+
	\boldsymbol{\eta},
	\label{eq:gfrft_sampled_signal}
\end{equation}
where \(\boldsymbol{\eta}\) denotes measurement noise or sampling perturbation.

To compensate for spectral folding, noise, and model mismatch, a correction operator \(\mathbf{H}_a(\mathbf{G})\) is applied in the reduced fractional spectral domain. The overall reconstruction operator is defined as
\begin{equation}
	\mathbf{W}_a
	=
	\mathbf{F}^{-a}
	\mathbf{W}_a(\mathbf{G})
	\mathbf{D}_s^{\mathrm{H}}.
	\label{eq:overall_reconstruction_operator}
\end{equation}
Thus, the GFRFT domain sampling reconstruction model can be written as
\begin{equation}
	\widetilde{\mathbf{x}}_a
	=
	\mathbf{W}_a
	\mathbf{H}_a(\mathbf{G})
	\left(
	\mathbf{S}_a^{\ast}
	\mathbf{x}
	+
	\boldsymbol{\eta}
	\right).
	\label{eq:general_gfrft_sampling_model}
\end{equation}

Compared with the baseline model in \eqref{eq:baseline_sampling_reconstruction_model}, the proposed model introduces the fractional order \(a\) into the sampling, correction, and reconstruction operators. Therefore, the framework can search for a more suitable spectral representation domain. When \(a=1\), and the proposed model degenerates into the GFT domain sampling reconstruction model. For clarity, Table~\ref{tab:notation_summary} summarizes the main notations and abbreviations used in the proposed GFRFT domain sampling reconstruction framework and theoretical analysis. In particular, SS, SM, and ST denote the subspace, smoothness, and stochastic priors, respectively, while UNC and PD denote the unconstrained and predefined recovery modes, respectively.

\begin{table}[!t]
	\centering
	\caption{Summary of Main Notations and Abbreviations}
	\label{tab:notation_summary}
	\scriptsize
	\renewcommand{\arraystretch}{1.14}
	\setlength{\tabcolsep}{2pt}
	\begin{tabularx}{\columnwidth}{@{}lXlX@{}}
		\toprule
		\textbf{Symbol} & \textbf{Meaning}
		& \textbf{Symbol} & \textbf{Meaning} \\
		\midrule
		
		\(a\), \(a^\ast\)
		& Order / optimum
		& \(\mathbf{F}^{a}\)
		& GFRFT matrix \\
		
		\(\mathbf{D}_s\)
		& Downsampling operator
		& \(\mathbf{\Phi}_a\)
		& Predefined reconstruction \\
		
		\(\mathbf{S}_a^{\ast}\)
		& Sampling operator
		& \(\mathbf{H}_a(\mathbf{G})\)
		& Correction operator \\
		
		\(\mathbf{W}_a\)
		& Reconstruction operator
		& \(\boldsymbol{\Gamma}_x\), \(\boldsymbol{\Gamma}_{\eta}\)
		& Signal / noise covariance \\
		
		\(\mathcal{U}_{K,a}\)
		& Fractional subspace
		& \(\mathbf{P}_{K,a}\)
		& Projection operator \\
		
		\(\rho_K(a)\)
		& Energy concentration
		& \(\mathbf{r}_{K,a}\)
		& Projection residual \\
		
		\(\mathbf{A}_a\)
		& Effective sampling matrix
		& \(\tau_a\), \(\nu_a\)
		& Leakage / noise terms \\
		
		SS, SM, ST
		& Subspace, smoothness, stochastic priors
		& UNC, PD
		& Unconstrained and predefined recovery \\
		
		\bottomrule
	\end{tabularx}
\end{table}
\subsection{Subspace Prior}
\label{subsec:subspace_prior_framework}

Under the subspace prior introduced in Section~II, the GFRFT domain sampling reconstruction framework can be specialized by substituting the low-dimensional representation \(\mathbf{x}=\mathbf{M}\mathbf{d}\) into the sampling model. The sampled signal becomes
\begin{equation}
	\mathbf{y}_a
	=
	\mathbf{S}_a^{\ast}
	\mathbf{M}
	\mathbf{d}.
	\label{eq:subspace_sampled_signal}
\end{equation}

\begin{proposition}
	\label{prop:subspace_based_reconstruction}
	Assume that \(\mathbf{S}_a^{\ast}\mathbf{M}\) has full column rank. Then the unconstrained reconstruction under the subspace prior is given by
	\begin{equation}
		\widetilde{\mathbf{x}}_{a,\mathrm{SS}}^{\mathrm{UNC}}
		=
		\mathbf{M}
		\left(
		\mathbf{S}_a^{\ast}
		\mathbf{M}
		\right)^{\dagger}
		\mathbf{y}_a .
		\label{eq:subspace_reconstruction}
	\end{equation}
	Equivalently, the corresponding reconstruction and correction operators are
	\begin{equation}
		\mathbf{W}_{a,\mathrm{SS}}^{\mathrm{UNC}}
		=
		\mathbf{M},
		\qquad
		\mathbf{H}_{a,\mathrm{SS}}^{\mathrm{UNC}}
		=
		\left(
		\mathbf{S}_a^{\ast}
		\mathbf{M}
		\right)^{\dagger}.
		\label{eq:subspace_unc_hw}
	\end{equation}
	
	For predefined reconstruction, let \(\boldsymbol{\Phi}_a\in\mathbb{C}^{N\times K}\) be a prescribed reconstruction matrix and assume that
	\(\boldsymbol{\Phi}_a^{\mathrm{H}}\boldsymbol{\Phi}_a\) is nonsingular. Then the predefined correction operator is
	\begin{equation}
		\mathbf{H}_{a,\mathrm{SS}}^{\mathrm{PD}}
		=
		\left(
		\boldsymbol{\Phi}_a^{\mathrm{H}}
		\boldsymbol{\Phi}_a
		\right)^{-1}
		\boldsymbol{\Phi}_a^{\mathrm{H}}
		\mathbf{M}
		\left(
		\mathbf{S}_a^{\ast}
		\mathbf{M}
		\right)^{\dagger},
		\label{eq:subspace_pre_h}
	\end{equation}
	and the corresponding reconstruction is
	\begin{equation}
		\widetilde{\mathbf{x}}_{a,\mathrm{SS}}^{\mathrm{PD}}
		=
		\boldsymbol{\Phi}_a
		\mathbf{H}_{a,\mathrm{SS}}^{\mathrm{PD}}
		\mathbf{y}_a .
		\label{eq:subspace_pre_reconstruction}
	\end{equation}
\end{proposition}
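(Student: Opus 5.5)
The plan is to treat the unconstrained case as a consistent least-squares estimation of the coefficient vector \(\mathbf{d}\), and the predefined case as the best approximation of that estimate within the range of \(\boldsymbol{\Phi}_a\).

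\textbf{Unconstrained case.} Under the subspace prior, the noiseless samples satisfy \eqref{eq:subspace_sampled_signal}, i.e. \(\mathbf{y}_a=\mathbf{S}_a^{\ast}\mathbf{M}\mathbf{d}\). Write \(\mathbf{B}_a=\mathbf{S}_a^{\ast}\mathbf{M}\in\mathbb{C}^{K\times K'}\). Since \(\mathbf{B}_a\) has full column rank, \(\mathbf{B}_a^{\mathrm{H}}\mathbf{B}_a\) is invertible and \(\mathbf{B}_a^{\dagger}=(\mathbf{B}_a^{\mathrm{H}}\mathbf{B}_a)^{-1}\mathbf{B}_a^{\mathrm{H}}\), so \(\mathbf{B}_a^{\dagger}\mathbf{B}_a=\mathbf{I}\).

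I would pose the recovery problem as \(\min_{\mathbf{d}}\|\mathbf{y}_a-\mathbf{B}_a\mathbf{d}\|_2^2\). This is the consistency criterion requiring the reconstruction, when resampled, to reproduce the observed samples. Its normal equations give the unique minimizer \(\widehat{\mathbf{d}}=\mathbf{B}_a^{\dagger}\mathbf{y}_a\). Mapping back through the generator yields \(\widetilde{\mathbf{x}}=\mathbf{M}\widehat{\mathbf{d}}\), which is \eqref{eq:subspace_reconstruction}.

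It should also be checked that this is perfect recovery in the noiseless case: \(\mathbf{M}\mathbf{B}_a^{\dagger}\mathbf{S}_a^{\ast}\mathbf{M}\mathbf{d}=\mathbf{M}\mathbf{d}=\mathbf{x}\). Matching this with the chain \(\mathbf{W}\mathbf{H}\mathbf{y}_a\) of \eqref{eq:general_gfrft_sampling_model} then identifies \(\mathbf{W}=\mathbf{M}\) and \(\mathbf{H}=\mathbf{B}_a^{\dagger}\), as in \eqref{eq:subspace_unc_hw}. Here \(\mathbf{W}_a\) is allowed to be chosen freely, which is what "unconstrained" means.

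\textbf{Predefined case.} Now the output is forced to lie in \(\mathrm{range}(\boldsymbol{\Phi}_a)\). The natural criterion is to choose \(\mathbf{H}\) so that \(\boldsymbol{\Phi}_a\mathbf{H}\mathbf{y}_a\) is the orthogonal projection of the unconstrained (perfect) reconstruction onto that range. Equivalently, minimize \(\|\mathbf{M}\mathbf{B}_a^{\dagger}\mathbf{y}_a-\boldsymbol{\Phi}_a\mathbf{c}\|_2\) over \(\mathbf{c}\), for every \(\mathbf{y}_a\).

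Since \(\boldsymbol{\Phi}_a^{\mathrm{H}}\boldsymbol{\Phi}_a\) is nonsingular, the normal equations give \(\mathbf{c}=(\boldsymbol{\Phi}_a^{\mathrm{H}}\boldsymbol{\Phi}_a)^{-1}\boldsymbol{\Phi}_a^{\mathrm{H}}\mathbf{M}\mathbf{B}_a^{\dagger}\mathbf{y}_a\). This is exactly \eqref{eq:subspace_pre_h}, and \eqref{eq:subspace_pre_reconstruction} follows at once. Because the minimization is pointwise in \(\mathbf{y}_a\), the same operator minimizes \(\|\mathbf{x}-\boldsymbol{\Phi}_a\mathbf{H}\mathbf{y}_a\|\) uniformly over the subspace prior set. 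Its output is \(\mathbf{P}_{\boldsymbol{\Phi}_a}\mathbf{x}\), which is the best error any reconstruction confined to \(\mathrm{range}(\boldsymbol{\Phi}_a)\) can achieve.

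\textbf{Main obstacle.} The algebra is routine. The delicate point is fixing the optimality criterion, which the statement leaves implicit, together with the dimensional bookkeeping. The symbol \(K\) is used both for the sample count and for the column count of \(\mathbf{M}\). The definition \eqref{eq:overall_reconstruction_operator} of \(\mathbf{W}_a\) also has to be reconciled with the choice \(\mathbf{W}=\mathbf{M}\). I would state explicitly that in the unconstrained mode \(\mathbf{W}_a\) is free, and prove optimality of the predefined operator as the minimax-over-prior-set solution described above. This requires the noiseless model \eqref{eq:subspace_sampled_signal}, which is the setting the proposition uses.
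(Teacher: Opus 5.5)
Your proposal is correct and follows essentially the same route as the paper's proof. Perfect noiseless recovery from $(\mathbf{S}_a^{\ast}\mathbf{M})^{\dagger}(\mathbf{S}_a^{\ast}\mathbf{M})=\mathbf{I}$ identifies $\mathbf{W}_{a,\mathrm{SS}}^{\mathrm{UNC}}=\mathbf{M}$ and $\mathbf{H}_{a,\mathrm{SS}}^{\mathrm{UNC}}=(\mathbf{S}_a^{\ast}\mathbf{M})^{\dagger}$, and the predefined operator comes from applying the projection $\boldsymbol{\Phi}_a(\boldsymbol{\Phi}_a^{\mathrm{H}}\boldsymbol{\Phi}_a)^{-1}\boldsymbol{\Phi}_a^{\mathrm{H}}$ to the unconstrained reconstruction; your explicit least-squares and best-approximation criteria, including the observation that the output $\mathbf{P}_{\boldsymbol{\Phi}_a}\mathbf{x}$ is pointwise optimal over the prior set, simply make precise the optimality notion the paper leaves implicit.
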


\textit{Proof.} See Appendix~\ref{app:proof_propositions}.

\subsection{Smoothness Prior}
\label{subsec:smoothness_prior_framework}

Under the smoothness prior introduced in Section~II, we specify its GFRFT domain realization. This prior constrains the reconstructed signal through a variation measuring operator. For a given fractional order \(a\), the same smoothness response rule is applied in the \(a\) order GFRFT domain, leading to
\begin{equation}
	\mathbf{T}_a(\mathbf{G})
	=
	\operatorname{diag}
	\left(
	T_a(g_0),
	T_a(g_1),
	\ldots,
	T_a(g_{N-1})
	\right),
	\label{eq:smoothness_filter_response_matrix}
\end{equation}
and the corresponding vertex domain operator is
\begin{equation}
	\mathbf{T}_{v,a}
	=
	\mathbf{F}^{-a}
	\mathbf{T}_a(\mathbf{G})
	\mathbf{F}^{a}.
	\label{eq:smoothness_vertex_operator}
\end{equation}
The smoothness constraint is then imposed as
\begin{equation}
	\left\|
	\mathbf{T}_{v,a}
	\mathbf{x}
	\right\|_2^2
	\leq
	\rho^2 .
	\label{eq:smoothness_prior_constraint}
\end{equation}

Based on this constraint, the smoothness induced reconstruction operator is defined as
\begin{equation}
	\mathbf{R}_{a,\mathrm{SM}}
	=
	\left(
	\mathbf{T}_{v,a}^{\mathrm{H}}
	\mathbf{T}_{v,a}
	\right)^{-1}
	\left(
	\mathbf{S}_a^{\ast}
	\right)^{\mathrm{H}} .
	\label{eq:smoothness_induced_reconstruction_operator}
\end{equation}

\begin{proposition}
	\label{prop:smoothness_based_reconstruction}
	Assume that \(\mathbf{S}_a^{\ast}\mathbf{R}_{a,\mathrm{SM}}\) has full column rank. Then the unconstrained reconstruction under the smoothness prior is
	\begin{equation}
		\widetilde{\mathbf{x}}_{a,\mathrm{SM}}^{\mathrm{UNC}}
		=
		\mathbf{R}_{a,\mathrm{SM}}
		\left(
		\mathbf{S}_a^{\ast}
		\mathbf{R}_{a,\mathrm{SM}}
		\right)^{\dagger}
		\mathbf{y}_a .
		\label{eq:smoothness_unc_reconstruction}
	\end{equation}
	Equivalently, the reconstruction and correction operators are
	\begin{equation}
		\mathbf{W}_{a,\mathrm{SM}}^{\mathrm{UNC}}
		=
		\mathbf{R}_{a,\mathrm{SM}},
		\qquad
		\mathbf{H}_{a,\mathrm{SM}}^{\mathrm{UNC}}
		=
		\left(
		\mathbf{S}_a^{\ast}
		\mathbf{R}_{a,\mathrm{SM}}
		\right)^{\dagger}.
		\label{eq:smoothness_unc_hw}
	\end{equation}
	
	For predefined reconstruction, let \(\boldsymbol{\Phi}_a\in\mathbb{C}^{N\times K}\) be a prescribed reconstruction matrix with nonsingular
	\(\boldsymbol{\Phi}_a^{\mathrm{H}}\boldsymbol{\Phi}_a\). Then
	\begin{equation}
		\mathbf{H}_{a,\mathrm{SM}}^{\mathrm{PD}}
		=
		\left(
		\boldsymbol{\Phi}_a^{\mathrm{H}}
		\boldsymbol{\Phi}_a
		\right)^{-1}
		\boldsymbol{\Phi}_a^{\mathrm{H}}
		\mathbf{R}_{a,\mathrm{SM}}
		\left(
		\mathbf{S}_a^{\ast}
		\mathbf{R}_{a,\mathrm{SM}}
		\right)^{\dagger},
		\label{eq:smoothness_pre_h}
	\end{equation}
	and the corresponding reconstruction is
	\begin{equation}
		\widetilde{\mathbf{x}}_{a,\mathrm{SM}}^{\mathrm{PD}}
		=
		\boldsymbol{\Phi}_a
		\mathbf{H}_{a,\mathrm{SM}}^{\mathrm{PD}}
		\mathbf{y}_a .
		\label{eq:smoothness_pre_reconstruction}
	\end{equation}
\end{proposition}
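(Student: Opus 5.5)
The plan is to reduce Proposition~\ref{prop:smoothness_based_reconstruction} to the structure already used for Proposition~\ref{prop:subspace_based_reconstruction}, with the smoothness-induced operator \(\mathbf{R}_{a,\mathrm{SM}}\) in place of the generator \(\mathbf{M}\).

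\textbf{Step 1: the smoothness-constrained problem and its minimizer.} First I will argue that, under the constraint \eqref{eq:smoothness_prior_constraint}, the natural (minimax / minimum-variation) estimate consistent with the samples solves
\begin{equation*}
\min_{\mathbf{x}} \|\mathbf{T}_{v,a}\mathbf{x}\|_2^2 \quad \text{s.t.}\quad \mathbf{S}_a^{\ast}\mathbf{x}=\mathbf{y}_a .
\end{equation*}
I will assume that \(\mathbf{T}_{v,a}^{\mathrm{H}}\mathbf{T}_{v,a}\) is invertible; this is implicit in the definition \eqref{eq:smoothness_induced_reconstruction_operator}.

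Form the Lagrangian and set its gradient to zero. This gives
\begin{equation*}
\mathbf{T}_{v,a}^{\mathrm{H}}\mathbf{T}_{v,a}\mathbf{x}=(\mathbf{S}_a^{\ast})^{\mathrm{H}}\boldsymbol{\lambda},
\end{equation*}
so \(\mathbf{x}=\mathbf{R}_{a,\mathrm{SM}}\boldsymbol{\lambda}\). The optimal signal therefore lies in the range of \(\mathbf{R}_{a,\mathrm{SM}}\), exactly as the subspace prior places \(\mathbf{x}\) in the range of \(\mathbf{M}\).

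\textbf{Step 2: determine the coefficient and read off the operators.} Next I impose consistency, \(\mathbf{S}_a^{\ast}\mathbf{R}_{a,\mathrm{SM}}\boldsymbol{\lambda}=\mathbf{y}_a\). Under the full column rank assumption, the pseudoinverse gives the unique (least-squares) solution \(\boldsymbol{\lambda}=(\mathbf{S}_a^{\ast}\mathbf{R}_{a,\mathrm{SM}})^{\dagger}\mathbf{y}_a\). Substituting back yields \eqref{eq:smoothness_unc_reconstruction}. Matching this with the chain \(\mathbf{W}\mathbf{H}\mathbf{y}_a\) in \eqref{eq:general_gfrft_sampling_model} identifies the operators in \eqref{eq:smoothness_unc_hw}.

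This identification is formally the same algebra as in Proposition~\ref{prop:subspace_based_reconstruction} with \(\mathbf{M}\mapsto\mathbf{R}_{a,\mathrm{SM}}\). I will state it as such and reuse that argument.

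\textbf{Step 3: predefined reconstruction.} When the output must lie in \(\mathrm{range}(\boldsymbol{\Phi}_a)\), I choose \(\mathbf{H}\) so that \(\boldsymbol{\Phi}_a\mathbf{H}\mathbf{y}_a\) is the orthogonal projection of the unconstrained solution onto that range. Equivalently, I solve
\begin{equation*}
\min_{\mathbf{H}} \|\boldsymbol{\Phi}_a\mathbf{H}\mathbf{y}_a-\widetilde{\mathbf{x}}_{a,\mathrm{SM}}^{\mathrm{UNC}}\|_2 \quad \text{for all } \mathbf{y}_a .
\end{equation*}
The normal equations \(\boldsymbol{\Phi}_a^{\mathrm{H}}\boldsymbol{\Phi}_a\mathbf{H}=\boldsymbol{\Phi}_a^{\mathrm{H}}\mathbf{R}_{a,\mathrm{SM}}(\mathbf{S}_a^{\ast}\mathbf{R}_{a,\mathrm{SM}})^{\dagger}\), together with nonsingularity of \(\boldsymbol{\Phi}_a^{\mathrm{H}}\boldsymbol{\Phi}_a\), give \eqref{eq:smoothness_pre_h} and then \eqref{eq:smoothness_pre_reconstruction}.

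\textbf{Main obstacle.} The hard part is Step~1: justifying that the constrained minimum-variation (or minimax over the smoothness set) solution is the intended unconstrained estimator. The route I would try first is the classical minimax argument of generalized sampling theory. The set of signals consistent with \(\mathbf{y}_a\) under \eqref{eq:smoothness_prior_constraint} is an ellipsoid-intersect-affine set that is symmetric about its minimum-\(\|\mathbf{T}_{v,a}\mathbf{x}\|\) point. That point is therefore the Chebyshev center, and hence the minimax-optimal recovery, independent of \(\rho\).

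A further dimension point must also be checked. The pseudoinverse matches an exact inverse only when \(\mathbf{S}_a^{\ast}\mathbf{R}_{a,\mathrm{SM}}\) is square. Otherwise I would use the full column rank assumption to guarantee uniqueness of \(\boldsymbol{\lambda}\) and interpret the result in the least-squares sense.
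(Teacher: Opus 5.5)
Your proposal is correct and follows the paper's own route: the paper likewise writes the estimate as $\mathbf{R}_{a,\mathrm{SM}}\widehat{\mathbf{c}}$, solves $\widehat{\mathbf{c}}=(\mathbf{S}_a^{\ast}\mathbf{R}_{a,\mathrm{SM}})^{\dagger}\mathbf{y}_a$, and obtains the PD case by applying the projector $\boldsymbol{\Phi}_a(\boldsymbol{\Phi}_a^{\mathrm{H}}\boldsymbol{\Phi}_a)^{-1}\boldsymbol{\Phi}_a^{\mathrm{H}}$; the only difference is that the paper simply posits the range-of-$\mathbf{R}_{a,\mathrm{SM}}$ form, whereas your Lagrangian and Chebyshev-center argument actually justifies it. Your dimension concern does not arise: $\mathbf{S}_a^{\ast}\mathbf{R}_{a,\mathrm{SM}}$ is always $K\times K$, so full column rank makes it invertible, the pseudoinverse is the ordinary inverse, and the consistency constraint $\mathbf{S}_a^{\ast}\widetilde{\mathbf{x}}=\mathbf{y}_a$ holds exactly.
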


 \textit{Proof.} See Appendix~\ref{app:proof_propositions}.

\subsection{Stochastic Prior}
\label{subsec:stochastic_prior_framework}

Under the stochastic prior introduced in Section~II, we incorporate the second order statistical information into the proposed GFRFT domain sampling reconstruction framework. Let \(\boldsymbol{\Gamma}_x\) and \(\boldsymbol{\Gamma}_{\eta}\) denote the covariance matrices of the zero-mean graph signal \(\mathbf{x}\) and the sampling noise \(\boldsymbol{\eta}\), respectively:
\begin{equation}
	\boldsymbol{\Gamma}_x
	=
	\mathbb{E}
	\left[
	\mathbf{x}
	\mathbf{x}^{\mathrm{H}}
	\right],
	\qquad
	\boldsymbol{\Gamma}_{\eta}
	=
	\mathbb{E}
	\left[
	\boldsymbol{\eta}
	\boldsymbol{\eta}^{\mathrm{H}}
	\right].
	\label{eq:stochastic_covariance_matrices}
\end{equation}
Assume that \(\mathbf{x}\) and \(\boldsymbol{\eta}\) are independent. The following proposition gives the corresponding stochastic prior based reconstruction form.

\begin{proposition}
	\label{prop:stochastic_reconstruction}
	Under the stochastic prior, the unconstrained linear MSE optimal reconstruction is
	\begin{equation}
		\widetilde{\mathbf{x}}_{a,\mathrm{ST}}^{\mathrm{UNC}}
		=
		\boldsymbol{\Gamma}_x
		\left(
		\mathbf{S}_a^{\ast}
		\right)^{\mathrm{H}}
		\left(
		\mathbf{S}_a^{\ast}
		\boldsymbol{\Gamma}_x
		\left(
		\mathbf{S}_a^{\ast}
		\right)^{\mathrm{H}}
		+
		\boldsymbol{\Gamma}_{\eta}
		\right)^{-1}
		\mathbf{y}_a .
		\label{eq:stochastic_unc_reconstruction}
	\end{equation}
	Equivalently, the corresponding reconstruction and correction operators are
	\begin{equation}
		\mathbf{W}_{a,\mathrm{ST}}^{\mathrm{UNC}}
		=
		\boldsymbol{\Gamma}_x
		\left(
		\mathbf{S}_a^{\ast}
		\right)^{\mathrm{H}},
		\qquad
		\mathbf{H}_{a,\mathrm{ST}}^{\mathrm{UNC}}
		=
		\left(
		\mathbf{S}_a^{\ast}
		\boldsymbol{\Gamma}_x
		\left(
		\mathbf{S}_a^{\ast}
		\right)^{\mathrm{H}}
		+
		\boldsymbol{\Gamma}_{\eta}
		\right)^{-1}.
		\label{eq:stochastic_unc_hw}
	\end{equation}
	
	For predefined reconstruction, let \(\boldsymbol{\Phi}_a\in\mathbb{C}^{N\times K}\) be a prescribed reconstruction matrix with nonsingular
	\(\boldsymbol{\Phi}_a^{\mathrm{H}}\boldsymbol{\Phi}_a\). Then
	\begin{equation}
		\mathbf{H}_{a,\mathrm{ST}}^{\mathrm{PD}}
		=
		\left(
		\boldsymbol{\Phi}_a^{\mathrm{H}}
		\boldsymbol{\Phi}_a
		\right)^{-1}
		\boldsymbol{\Phi}_a^{\mathrm{H}}
		\boldsymbol{\Gamma}_x
		\left(
		\mathbf{S}_a^{\ast}
		\right)^{\mathrm{H}}
		\left(
		\mathbf{S}_a^{\ast}
		\boldsymbol{\Gamma}_x
		\left(
		\mathbf{S}_a^{\ast}
		\right)^{\mathrm{H}}
		+
		\boldsymbol{\Gamma}_{\eta}
		\right)^{-1},
		\label{eq:stochastic_pre_h}
	\end{equation}
	and the corresponding reconstruction is
	\begin{equation}
		\widetilde{\mathbf{x}}_{a,\mathrm{ST}}^{\mathrm{PD}}
		=
		\boldsymbol{\Phi}_a
		\mathbf{H}_{a,\mathrm{ST}}^{\mathrm{PD}}
		\mathbf{y}_a .
		\label{eq:stochastic_pre_reconstruction}
	\end{equation}
\end{proposition}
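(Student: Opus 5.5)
The plan is to treat the unconstrained case as a standard linear minimum mean-square error problem and derive the predefined case from it by an orthogonality (projection) argument. Throughout, write \(\mathbf{S}=\mathbf{S}_a^{\ast}\) for brevity, and assume \(\mathbf{C}_y=\mathbf{S}\boldsymbol{\Gamma}_x\mathbf{S}^{\mathrm{H}}+\boldsymbol{\Gamma}_{\eta}\) is nonsingular; this holds, e.g., when \(\boldsymbol{\Gamma}_{\eta}\succ 0\), and is implicit in the statement.

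\emph{Unconstrained case.} Consider an arbitrary linear estimator \(\widetilde{\mathbf{x}}=\mathbf{B}\mathbf{y}_a\) with \(\mathbf{B}\in\mathbb{C}^{N\times K}\).
\begin{enumerate}
\item Use the zero-mean and independence assumptions to compute the second-order statistics:
\begin{equation*}
\mathbb{E}[\mathbf{y}_a\mathbf{y}_a^{\mathrm{H}}]=\mathbf{C}_y,\qquad \mathbb{E}[\mathbf{x}\mathbf{y}_a^{\mathrm{H}}]=\boldsymbol{\Gamma}_x\mathbf{S}^{\mathrm{H}}.
\end{equation*}
\item Expand the MSE as a trace quadratic in \(\mathbf{B}\):
\begin{equation*}
J(\mathbf{B})=\operatorname{tr}\left(\boldsymbol{\Gamma}_x-\mathbf{B}\mathbf{S}\boldsymbol{\Gamma}_x-\boldsymbol{\Gamma}_x\mathbf{S}^{\mathrm{H}}\mathbf{B}^{\mathrm{H}}+\mathbf{B}\mathbf{C}_y\mathbf{B}^{\mathrm{H}}\right).
\end{equation*}
\item Complete the square. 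With \(\mathbf{B}_0=\boldsymbol{\Gamma}_x\mathbf{S}^{\mathrm{H}}\mathbf{C}_y^{-1}\),
\begin{equation*}
J(\mathbf{B})=J(\mathbf{B}_0)+\operatorname{tr}\left((\mathbf{B}-\mathbf{B}_0)\mathbf{C}_y(\mathbf{B}-\mathbf{B}_0)^{\mathrm{H}}\right).
\end{equation*}
The second term is nonnegative because \(\mathbf{C}_y\succ0\), so \(\mathbf{B}_0\) is the unique minimizer. This gives \eqref{eq:stochastic_unc_reconstruction}.
\item Factor \(\mathbf{B}_0=\mathbf{W}\mathbf{H}\) with \(\mathbf{W}=\boldsymbol{\Gamma}_x\mathbf{S}^{\mathrm{H}}\) and \(\mathbf{H}=\mathbf{C}_y^{-1}\), which gives \eqref{eq:stochastic_unc_hw}.
\end{enumerate}

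\emph{Predefined case.} Restrict to estimators \(\boldsymbol{\Phi}_a\mathbf{H}\mathbf{y}_a\) with \(\mathbf{H}\in\mathbb{C}^{K\times K}\).
\begin{enumerate}
\item Substitute \(\mathbf{B}=\boldsymbol{\Phi}_a\mathbf{H}\) into the completed-square form above. Minimizing \(J\) then reduces to minimizing
\begin{equation*}
\operatorname{tr}\left((\boldsymbol{\Phi}_a\mathbf{H}-\mathbf{B}_0)\mathbf{C}_y(\boldsymbol{\Phi}_a\mathbf{H}-\mathbf{B}_0)^{\mathrm{H}}\right),
\end{equation*}
which is a weighted Frobenius norm \(\|(\boldsymbol{\Phi}_a\mathbf{H}-\mathbf{B}_0)\mathbf{C}_y^{1/2}\|_F^2\).
\item Set the gradient with respect to \(\mathbf{H}^{\ast}\) to zero:
\begin{equation*}
\boldsymbol{\Phi}_a^{\mathrm{H}}(\boldsymbol{\Phi}_a\mathbf{H}-\mathbf{B}_0)\mathbf{C}_y=\mathbf{0}.
\end{equation*}
\item Since \(\mathbf{C}_y\) is invertible, cancel it on the right. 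Since \(\boldsymbol{\Phi}_a^{\mathrm{H}}\boldsymbol{\Phi}_a\) is nonsingular, solve
\begin{equation*}
\mathbf{H}=(\boldsymbol{\Phi}_a^{\mathrm{H}}\boldsymbol{\Phi}_a)^{-1}\boldsymbol{\Phi}_a^{\mathrm{H}}\mathbf{B}_0,
\end{equation*}
which is exactly \eqref{eq:stochastic_pre_h}.
\item Confirm this stationary point is the global minimum: the objective is a convex quadratic in \(\mathbf{H}\), and it is strictly convex because \(\boldsymbol{\Phi}_a^{\mathrm{H}}\boldsymbol{\Phi}_a\succ0\) and \(\mathbf{C}_y\succ0\).
\end{enumerate}
This yields \eqref{eq:stochastic_pre_reconstruction}. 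The result has a clear interpretation: the predefined solution is the orthogonal projection of the unconstrained LMMSE solution onto the range of \(\boldsymbol{\Phi}_a\), mirroring the structure of Propositions~\ref{prop:subspace_based_reconstruction} and~\ref{prop:smoothness_based_reconstruction}.

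\emph{Main obstacle.} The one step that needs care is the cancellation of the weight \(\mathbf{C}_y\) in the normal equation. It is what makes the weighted projection coincide with the unweighted projection \((\boldsymbol{\Phi}_a^{\mathrm{H}}\boldsymbol{\Phi}_a)^{-1}\boldsymbol{\Phi}_a^{\mathrm{H}}\), and it depends on \(\mathbf{H}\) being an unrestricted \(K\times K\) matrix. If \(\mathbf{C}_y\) were singular, I would replace \(\mathbf{C}_y^{-1}\) by a pseudoinverse and argue on its range.
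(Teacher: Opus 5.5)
Your proof is correct and follows the same skeleton as the paper's: second-order-statistics LMMSE for the unconstrained case, then orthogonal projection onto the range of $\boldsymbol{\Phi}_a$ for the predefined case. The difference is that you justify two steps the paper only asserts.

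\begin{itemize}
\item \textbf{Unconstrained case.} The appendix simply writes down the Wiener form $\mathbb{E}[\mathbf{x}\mathbf{y}_a^{\mathrm{H}}](\mathbb{E}[\mathbf{y}_a\mathbf{y}_a^{\mathrm{H}}])^{-1}\mathbf{y}_a$ and substitutes the same two covariance identities you compute. Your completion of the square actually proves that this is the unique MSE minimizer over linear estimators. It also makes explicit the nonsingularity of $\mathbf{C}_y$, which the statement uses only implicitly.
\item \textbf{Predefined case.} The paper just applies $\boldsymbol{\Pi}_a=\boldsymbol{\Phi}_a(\boldsymbol{\Phi}_a^{\mathrm{H}}\boldsymbol{\Phi}_a)^{-1}\boldsymbol{\Phi}_a^{\mathrm{H}}$ to the unconstrained estimate. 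In effect it takes ``project the unconstrained solution'' as the definition of the predefined solution, uniformly for all three priors. You instead show that this projection is the MSE-optimal choice among all estimators $\boldsymbol{\Phi}_a\mathbf{H}\mathbf{y}_a$. The key point is the one you flag: the weight $\mathbf{C}_y$ cancels from the normal equation because $\mathbf{H}$ is an unrestricted $K\times K$ matrix.
\end{itemize}

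The paper's route buys brevity and a single argument that covers the subspace, smoothness and stochastic cases at once. Yours buys a genuine optimality statement for the predefined correction in \eqref{eq:stochastic_pre_h}, which is what gives that formula its meaning, at the cost of being tied to the MSE criterion.
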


\textit{Proof.} See Appendix~\ref{app:proof_propositions}.

\begin{figure*}[!t]
	\centering
	\includegraphics[width=0.80\textwidth]{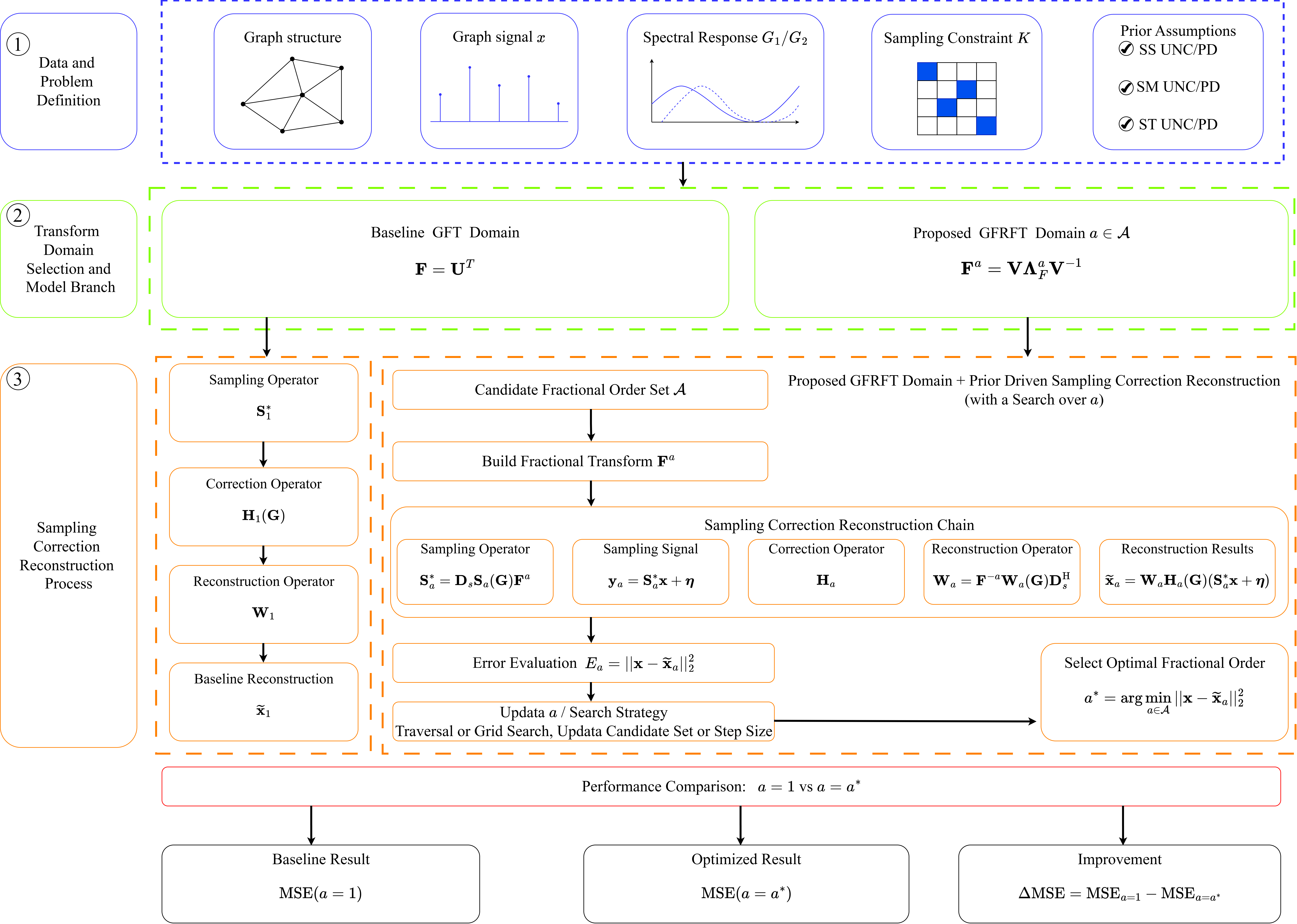}
	
	\vspace{1mm}
	\refstepcounter{figure}
	\label{fig:gfrft_framework}
	{\footnotesize Fig.~\thefigure. Overall framework of the GFRFT domain sampling reconstruction method.\par}
\end{figure*}

\subsection{Optimal Fractional Order Selection}
\label{subsec:optimal_fractional_order}

The optimal fractional order is selected by minimizing the reconstruction error:
\begin{equation}
	a^{\ast}
	=
	\arg\min_{a\in\mathcal{A}}
	\left\|
	\mathbf{x}
	-
	\widetilde{\mathbf{x}}_a
	\right\|_2^2 .
	\label{eq:optimal_fractional_order}
\end{equation}

The overall framework of the GFRFT domain sampling reconstruction method is illustrated in Fig.~\ref{fig:gfrft_framework}, where the GFT domain reconstruction branch and the fractional domain order search branch are compared within a unified reconstruction pipeline. The overall procedure for the GFRFT domain sampling reconstruction framework is summarized in Algorithm~\ref{alg:gfrft_order_selection}.

In supervised experiments, the ground truth graph signal is available, and the reconstruction error can be directly evaluated. This fractional order search can be interpreted as adaptive spectral domain selection. If $a^{\ast}=1$, the conventional GFT domain is already suitable for the graph signal. If $a^{\ast}\neq1$, a fractional spectral domain provides a more suitable representation for the given signal and sampling model.

\begin{algorithm}[!t]
	\caption{Fractional order selection for GFRFT domain reconstruction}
	\label{alg:gfrft_order_selection}
	\begin{algorithmic}[1]
		\Require Graph signal $\mathbf{x}$, candidate order set $\mathcal{A}$, sampling number $K$, prior type $\mathcal{P}$, and reconstruction mode $\mathcal{M}$
		\Ensure Optimal fractional order $a^{\ast}$ and reconstructed signal $\widetilde{\mathbf{x}}_{a^{\ast}}$
		
		\State Initialize $E_{\min}\leftarrow +\infty$, $a^{\ast}\leftarrow 1$
		
		\For{each $a\in\mathcal{A}$}
		\State Construct $\mathbf{F}^{a}$ and compute the GFRFT coefficients
		\[
		\widehat{\mathbf{x}}_a=\mathbf{F}^{a}\mathbf{x}
		\]
		
		\State Obtain the sampled signal by spectral filtering and folding
		\[
		\mathbf{y}_a
		=
		\mathbf{S}_a^{\ast}\mathbf{x}
		+
		\boldsymbol{\eta}
		\]
		
		\State Design the correction operator $\mathbf{H}_a(\mathbf{G})$ and the reconstruction operator $\mathbf{W}_a$ according to $\mathcal{P}$ and $\mathcal{M}$
		
		\State Reconstruct the graph signal
		\[
		\widetilde{\mathbf{x}}_a
		=
		\mathbf{W}_a
		\mathbf{H}_a(\mathbf{G})
		\mathbf{y}_a
		\]
		
		\State Compute the reconstruction error
		\[
		E_a
		=
		10\log_{10}
		\left(
		\frac{1}{N}
		\left\|
		\mathbf{x}
		-
		\widetilde{\mathbf{x}}_a
		\right\|_2^2
		\right)
		\]
		
		\If{$E_a<E_{\min}$}
		\State $E_{\min}\leftarrow E_a$
		\State $a^{\ast}\leftarrow a$
		\State $\widetilde{\mathbf{x}}_{a^{\ast}}\leftarrow \widetilde{\mathbf{x}}_a$
		\EndIf
		\EndFor
		
		\State \Return $a^{\ast}$ and $\widetilde{\mathbf{x}}_{a^{\ast}}$
	\end{algorithmic}
\end{algorithm}

\section{Theoretical Interpretation of GFRFT Domain Reconstruction}
\label{sec:gfrft_projection_mechanism}

\subsection{Fractional Spectral Subspace and Energy Concentration}
\label{subsec:fractional_subspace_energy_concentration}

The preceding section formulates graph signal sampling and reconstruction in the GFRFT domain. This section further explains why selecting a suitable fractional order may improve reconstruction performance. The key point is that the fractional order \(a\) changes the spectral representation used for low-dimensional approximation. When \(a=1\), the representation reduces to the GFT domain, and the reconstructed signal is constrained by a GFT spectral subspace.

For a fixed subspace dimension \(K\), different fractional orders generally correspond to different \(K\)-dimensional spectral subspaces. Therefore, the role of \(a\) is not merely to rescale spectral coefficients, but to determine which spectral subspace is used to capture the dominant structure of the signal. If a certain fractional subspace is better aligned with the signal structure, more energy can be captured with the same dimension \(K\), and the residual outside the selected subspace can be reduced.

\begin{definition}
	\label{def:fractional_spectral_energy_concentration}
	Let \(\mathbf{F}^{a}\) denote the \(a\) order GFRFT matrix. For a given subspace dimension \(K\), let \(\mathcal{U}_{K,a}\) be the \(K\)-dimensional fractional spectral subspace spanned by the selected dominant spectral components in the \(a\) order GFRFT domain. Denote by \(\mathbf{U}_{K,a}\) an orthonormal basis of \(\mathcal{U}_{K,a}\). The orthogonal projection operator onto this subspace is defined as
	\begin{equation}
		\mathbf{P}_{K,a}
		=
		\mathbf{U}_{K,a}\mathbf{U}_{K,a}^{\mathrm{H}} .
		\label{eq:projection_operator_fractional}
	\end{equation}
	Then, any graph signal \(\mathbf{x}\) can be decomposed as
	\begin{equation}
		\mathbf{x}
		=
		\mathbf{x}_{K,a}
		+
		\mathbf{r}_{K,a},
		\label{eq:fractional_low_dimensional_decomposition}
	\end{equation}
	where
	\begin{equation}
		\mathbf{x}_{K,a}
		=
		\mathbf{P}_{K,a}\mathbf{x}
	\end{equation}
	is the component captured by \(\mathcal{U}_{K,a}\), and
	\begin{equation}
		\mathbf{r}_{K,a}
		=
		\left(
		\mathbf{I}
		-
		\mathbf{P}_{K,a}
		\right)
		\mathbf{x}
	\end{equation}
	is the residual component outside this subspace.
	
	The \(K\)-dimensional energy concentration ratio in the \(a\)-order fractional spectral domain is defined as
	\begin{equation}
		\rho_K(a)
		=
		\frac{
			\left\|
			\mathbf{P}_{K,a}\mathbf{x}
			\right\|_2^2
		}{
			\left\|
			\mathbf{x}
			\right\|_2^2
		}.
		\label{eq:fractional_energy_concentration_ratio}
	\end{equation}
\end{definition}

According to Definition~\ref{def:fractional_spectral_energy_concentration}, \(\rho_K(a)\) measures the proportion of signal energy captured by the \(K\)-dimensional fractional spectral subspace. Since \(\mathbf{P}_{K,a}\) is an orthogonal projection operator, the residual energy satisfies
\begin{equation}
	\left\|
	\mathbf{r}_{K,a}
	\right\|_2^2
	=
	\left[
	1-\rho_K(a)
	\right]
	\left\|
	\mathbf{x}
	\right\|_2^2 .
	\label{eq:residual_energy_concentration_relation}
\end{equation}
Thus, a larger \(\rho_K(a)\) directly implies a smaller low-dimensional projection residual. In particular, if there exists a fractional order \(a\neq 1\) such that
\begin{equation}
	\rho_K(a)
	>
	\rho_K(1),
	\label{eq:fractional_energy_advantage_condition}
\end{equation}
then the corresponding GFRFT domain captures more signal energy than the GFT domain under the same subspace dimension \(K\). This indicates that the GFRFT domain can provide a more suitable low-dimensional spectral subspace for the given graph signal, thereby reducing the projection residual and potentially improving the sampling and reconstruction performance.

\subsection{From Fractional Low-Dimensional Representation to Sampling Reconstruction}
\label{subsec:fractional_low_dimensional_sampling}

The preceding analysis shows that the fractional order \(a\) can change the low-dimensional spectral subspace and hence affect the projection residual of a graph signal. However, a smaller projection residual does not necessarily imply a smaller reconstruction error in a sampling based setting. This is because the final reconstruction performance also depends on the identifiability of the sampled subspace, the leakage of the spectral residual, and the amplification of sampling noise.

Substituting the low-dimensional decomposition in
\eqref{eq:fractional_low_dimensional_decomposition} into the sampling model
in \eqref{eq:gfrft_sampled_signal}, the sampled observation can be rewritten as
\begin{equation}
	\mathbf{y}_a
	=
	\mathbf{S}_a^{\ast}\mathbf{x}_{K,a}
	+
	\mathbf{S}_a^{\ast}\mathbf{r}_{K,a}
	+
	\boldsymbol{\eta}.
	\label{eq:fractional_sampling_decomposition}
\end{equation}
Since \(\mathbf{x}_{K,a}\in\mathcal{U}_{K,a}\), there exists a coefficient vector
\(\mathbf{c}_a\) such that
\begin{equation}
	\mathbf{x}_{K,a}
	=
	\mathbf{U}_{K,a}\mathbf{c}_a .
	\label{eq:low_dimensional_coefficient}
\end{equation}
By defining the effective sampling matrix as
\begin{equation}
	\mathbf{A}_a
	=
	\mathbf{S}_a^{\ast}\mathbf{U}_{K,a},
	\label{eq:fractional_effective_sampling_matrix}
\end{equation}
the observation model becomes
\begin{equation}
	\mathbf{y}_a
	=
	\mathbf{A}_a\mathbf{c}_a
	+
	\mathbf{S}_a^{\ast}\mathbf{r}_{K,a}
	+
	\boldsymbol{\eta}.
	\label{eq:fractional_sampling_observation}
\end{equation}
Equation~\eqref{eq:fractional_sampling_observation} shows that the sampled signal consists of the low-dimensional component, the sampled residual component, and the noise term. Therefore, the reconstruction performance depends not only on the magnitude of the projection residual \(\mathbf{r}_{K,a}\), but also on the stability of the effective sampling matrix \(\mathbf{A}_a\).

If \(\mathbf{A}_a\) has full column rank, the low-dimensional coefficient can be estimated by the least squares solution
\begin{equation}
	\widehat{\mathbf{c}}_a
	=
	\arg\min_{\mathbf{c}}
	\left\|
	\mathbf{y}_a-\mathbf{A}_a\mathbf{c}
	\right\|_2^2
	=
	\mathbf{A}_a^{\dagger}\mathbf{y}_a ,
	\label{eq:fractional_ls_coefficient}
\end{equation}
and the corresponding reconstruction is
\begin{equation}
	\widetilde{\mathbf{x}}_a
	=
	\mathbf{U}_{K,a}\widehat{\mathbf{c}}_a
	=
	\mathbf{U}_{K,a}\mathbf{A}_a^{\dagger}\mathbf{y}_a .
	\label{eq:fractional_ls_reconstruction}
\end{equation}
This formulation indicates that the reconstruction error is affected by three factors: the energy concentration of the fractional subspace, the leakage of the residual component through sampling and inversion, and the amplification of the noise by \(\mathbf{A}_a^{\dagger}\).

\begin{proposition}
	\label{prop:fractional_projection_advantage}
	If there exists a fractional order \(a\neq 1\) such that
	\begin{equation}
		\rho_K(a)>\rho_K(1),
		\label{eq:prop_energy_condition}
	\end{equation}
	then
	\begin{equation}
		\left\|
		\mathbf{r}_{K,a}
		\right\|_2^2
		<
		\left\|
		\mathbf{r}_{K,1}
		\right\|_2^2 .
		\label{eq:prop_residual_condition}
	\end{equation}
\end{proposition}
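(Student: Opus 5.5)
The plan is to reduce the claim to the Pythagorean identity for orthogonal projections, which was already recorded in \eqref{eq:residual_energy_concentration_relation}. We assume, as the strict inequality \eqref{eq:prop_energy_condition} implicitly requires, that \(\mathbf{x}\neq\mathbf{0}\); otherwise both sides of \eqref{eq:prop_energy_condition} are undefined.

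The first step is to verify \eqref{eq:residual_energy_concentration_relation} for an arbitrary order \(a\). By Definition~\ref{def:fractional_spectral_energy_concentration}, \(\mathbf{U}_{K,a}\) has orthonormal columns. Hence \(\mathbf{P}_{K,a}=\mathbf{U}_{K,a}\mathbf{U}_{K,a}^{\mathrm{H}}\) satisfies \(\mathbf{P}_{K,a}^{\mathrm{H}}=\mathbf{P}_{K,a}\) and \(\mathbf{P}_{K,a}^2=\mathbf{P}_{K,a}\), since \(\mathbf{U}_{K,a}^{\mathrm{H}}\mathbf{U}_{K,a}=\mathbf{I}_K\). It follows that
\begin{equation*}
\mathbf{x}_{K,a}^{\mathrm{H}}\mathbf{r}_{K,a}=\mathbf{x}^{\mathrm{H}}\mathbf{P}_{K,a}(\mathbf{I}-\mathbf{P}_{K,a})\mathbf{x}=0 .
\end{equation*}
Applying this orthogonality to the decomposition \eqref{eq:fractional_low_dimensional_decomposition} gives \(\|\mathbf{x}\|_2^2=\|\mathbf{x}_{K,a}\|_2^2+\|\mathbf{r}_{K,a}\|_2^2\). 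Dividing the first term by \(\|\mathbf{x}\|_2^2\) and using \eqref{eq:fractional_energy_concentration_ratio}, we obtain \(\|\mathbf{r}_{K,a}\|_2^2=[1-\rho_K(a)]\|\mathbf{x}\|_2^2\).

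The second step specializes this identity. Taking it at the given order \(a\) and at \(a=1\), where \(\mathcal{U}_{K,1}\) is the GFT spectral subspace, and subtracting, we get
\begin{equation*}
\|\mathbf{r}_{K,1}\|_2^2-\|\mathbf{r}_{K,a}\|_2^2=\left[\rho_K(a)-\rho_K(1)\right]\|\mathbf{x}\|_2^2 .
\end{equation*}
The right-hand side is strictly positive by \eqref{eq:prop_energy_condition} and \(\mathbf{x}\neq\mathbf{0}\), which yields \eqref{eq:prop_residual_condition}.

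There is no substantive obstacle. The only point needing care is that the orthogonality argument depends on \(\mathbf{U}_{K,a}\) being orthonormal. This must be so even though \(\mathbf{F}^{a}\) itself need not be unitary; if the selected fractional spectral components are not orthonormal, one first orthonormalizes them (for example by a QR factorization), which leaves the subspace \(\mathcal{U}_{K,a}\) and hence \(\mathbf{P}_{K,a}\) unchanged. Definition~\ref{def:fractional_spectral_energy_concentration} already stipulates an orthonormal basis, so this is only a remark.
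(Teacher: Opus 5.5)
Your proposal is correct and follows the paper's argument. The paper compares $1-\rho_K(a)<1-\rho_K(1)$ and cites the identity $\|\mathbf{r}_{K,a}\|_2^2=[1-\rho_K(a)]\|\mathbf{x}\|_2^2$. You additionally derive that identity from $\mathbf{P}_{K,a}^{\mathrm{H}}=\mathbf{P}_{K,a}=\mathbf{P}_{K,a}^2$ and state explicitly the implicit assumption $\mathbf{x}\neq\mathbf{0}$.
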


\begin{proof}[\hspace*{1em}Proof]
	Since \(\rho_K(a)>\rho_K(1)\) implies
	\(1-\rho_K(a)<1-\rho_K(1)\), the conclusion follows directly from
	\eqref{eq:residual_energy_concentration_relation}.
\end{proof}

Proposition~\ref{prop:fractional_projection_advantage} shows that a fractional spectral subspace with a higher energy concentration ratio has a smaller low-dimensional projection residual than the GFT subspace. Nevertheless, this condition alone is not sufficient to guarantee a smaller sampling reconstruction error, because the residual leakage and the noise amplification must also be controlled. The following theorem gives a sufficient condition under which the fractional domain sampling reconstruction outperforms the GFT domain reconstruction.

\begin{theorem}
	\label{thm:fractional_reconstruction_advantage}
	Consider the fractional domain sampling model in
	\eqref{eq:gfrft_sampled_signal} and the low-dimensional decomposition in
	\eqref{eq:fractional_low_dimensional_decomposition}. Let
	\(\widetilde{\mathbf{x}}_a\) be the least squares reconstruction defined in
	\eqref{eq:fractional_ls_reconstruction}. Assume that the effective sampling matrix \(\mathbf{A}_a\) defined in
	\eqref{eq:fractional_effective_sampling_matrix} has full column rank, and that the sampling noise satisfies
	\begin{equation}
		\mathbb{E}\left[\boldsymbol{\eta}\right]
		=
		\mathbf{0},
		\quad
		\mathbb{E}
		\left[
		\boldsymbol{\eta}\boldsymbol{\eta}^{\mathrm{H}}
		\right]
		=
		\sigma^2\mathbf{I}.
		\label{eq:theorem_noise_assumption}
	\end{equation}
	Define the residual leakage coefficient and the noise amplification term as
	\begin{equation}
		\tau_a
		=
		\left\|
		\mathbf{A}_a^{\dagger}
		\mathbf{S}_a^{\ast}
		\left(
		\mathbf{I}-\mathbf{P}_{K,a}
		\right)
		\right\|_2
		\label{eq:theorem_leakage_coefficient}
	\end{equation}
	and
	\begin{equation}
		\nu_a
		=
		\sigma^2
		\left\|
		\mathbf{A}_a^{\dagger}
		\right\|_F^2 .
		\label{eq:theorem_noise_amplification}
	\end{equation}
	If there exists a fractional order \(a\in(0,1)\) such that
	\begin{equation}
		\left(1+\tau_a^2\right)
		\left\|
		\mathbf{r}_{K,a}
		\right\|_2^2
		+
		\nu_a
		<
		\left\|
		\mathbf{r}_{K,1}
		\right\|_2^2,
		\label{eq:theorem_sufficient_condition}
	\end{equation}
	then
	\begin{equation}
		\mathbb{E}
		\left[
		\left\|
		\mathbf{x}
		-
		\widetilde{\mathbf{x}}_a
		\right\|_2^2
		\right]
		<
		\mathbb{E}
		\left[
		\left\|
		\mathbf{x}
		-
		\widetilde{\mathbf{x}}_1
		\right\|_2^2
		\right],
		\label{eq:theorem_reconstruction_advantage}
	\end{equation}
	where \(\widetilde{\mathbf{x}}_1\) denotes the reconstruction constrained to
	the \(K\)-dimensional GFT subspace.
\end{theorem}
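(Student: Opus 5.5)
We will decompose the reconstruction error of the least squares estimator \eqref{eq:fractional_ls_reconstruction} into three pieces: the projection residual, the leakage of the residual through sampling and inversion, and the noise. Using \eqref{eq:fractional_sampling_observation} and $\mathbf{A}_a^{\dagger}\mathbf{A}_a=\mathbf{I}_K$ (full column rank), we get
\begin{equation*}
	\widehat{\mathbf{c}}_a
	=
	\mathbf{c}_a
	+
	\mathbf{A}_a^{\dagger}\mathbf{S}_a^{\ast}\mathbf{r}_{K,a}
	+
	\mathbf{A}_a^{\dagger}\boldsymbol{\eta}.
\end{equation*}
Hence
\begin{equation*}
	\mathbf{x}-\widetilde{\mathbf{x}}_a
	=
	\mathbf{r}_{K,a}
	-
	\mathbf{U}_{K,a}\mathbf{A}_a^{\dagger}\mathbf{S}_a^{\ast}\mathbf{r}_{K,a}
	-
	\mathbf{U}_{K,a}\mathbf{A}_a^{\dagger}\boldsymbol{\eta}.
\end{equation*}
The first term lies in $\mathcal{U}_{K,a}^{\perp}$, and the other two lie in $\mathcal{U}_{K,a}$. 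Pythagoras therefore separates the residual from the in-subspace error. Since $\mathbf{U}_{K,a}$ has orthonormal columns, it preserves norms, which leaves the bookkeeping in coefficient space.

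Next we take expectations. The cross term between the deterministic leakage term and $\mathbf{A}_a^{\dagger}\boldsymbol{\eta}$ vanishes because $\mathbb{E}[\boldsymbol{\eta}]=\mathbf{0}$. By \eqref{eq:theorem_noise_assumption},
\begin{equation*}
	\mathbb{E}\|\mathbf{A}_a^{\dagger}\boldsymbol{\eta}\|_2^2
	=
	\sigma^2\operatorname{tr}\big(\mathbf{A}_a^{\dagger}(\mathbf{A}_a^{\dagger})^{\mathrm{H}}\big)
	=
	\nu_a .
\end{equation*}
For the leakage term, we write $\mathbf{r}_{K,a}=(\mathbf{I}-\mathbf{P}_{K,a})\mathbf{r}_{K,a}$. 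Then
\begin{equation*}
	\|\mathbf{A}_a^{\dagger}\mathbf{S}_a^{\ast}\mathbf{r}_{K,a}\|_2
	\le
	\tau_a\|\mathbf{r}_{K,a}\|_2
\end{equation*}
by the definition \eqref{eq:theorem_leakage_coefficient}. This yields the upper bound
\begin{equation*}
	\mathbb{E}\|\mathbf{x}-\widetilde{\mathbf{x}}_a\|_2^2
	\le
	(1+\tau_a^2)\|\mathbf{r}_{K,a}\|_2^2+\nu_a .
\end{equation*}

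For the GFT side we need a \emph{lower} bound. We apply the same decomposition at $a=1$. The orthogonal component of $\mathbf{x}-\widetilde{\mathbf{x}}_1$ is exactly $\mathbf{r}_{K,1}$, because $\widetilde{\mathbf{x}}_1$ is constrained to the $K$-dimensional GFT subspace $\mathcal{U}_{K,1}$. Pythagoras (or simply the fact that the orthogonal projection is the best approximation in the subspace) then gives
\begin{equation*}
	\|\mathbf{x}-\widetilde{\mathbf{x}}_1\|_2^2
	\ge
	\|(\mathbf{I}-\mathbf{P}_{K,1})\mathbf{x}\|_2^2
	=
	\|\mathbf{r}_{K,1}\|_2^2
\end{equation*}
pointwise, and hence also in expectation. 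Chaining this with the upper bound and \eqref{eq:theorem_sufficient_condition} gives \eqref{eq:theorem_reconstruction_advantage}.

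The only delicate step is the GFT lower bound. It requires $\widetilde{\mathbf{x}}_1\in\mathcal{U}_{K,1}$, which holds whatever estimator of the form $\mathbf{U}_{K,1}\widehat{\mathbf{c}}$ is used; this is what the phrase ``constrained to the $K$-dimensional GFT subspace'' in the statement is needed for. No rank or noise assumption at $a=1$ is needed, so we will state this explicitly. A secondary point is that the mixed term in the leakage-plus-noise expansion disappears only in expectation, so the comparison is genuinely in mean. The restriction $a\in(0,1)$ plays no role in the argument beyond being the search range.
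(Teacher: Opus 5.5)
Your proof is correct and follows the paper's argument essentially step for step. The shared steps are: the error decomposition via $\mathbf{A}_a^{\dagger}\mathbf{A}_a=\mathbf{I}$; the Pythagorean split into $\|\mathbf{r}_{K,a}\|_2^2$ plus an in-subspace error; the expectation removing the cross term and yielding $\nu_a$; the operator-norm bound through $\tau_a$; and the best-approximation lower bound $\|\mathbf{r}_{K,1}\|_2^2$ for any estimator confined to $\mathcal{U}_{K,1}$. Your added remarks that this lower bound holds pointwise and needs no rank or noise assumption at $a=1$ are accurate and make explicit what the paper leaves implicit.
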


\textit{Proof.} See Appendix~\ref{app:proof_fractional_reconstruction_advantage}. 

The theorem~\ref{thm:fractional_reconstruction_advantage} states that the GFRFT domain reconstruction is not always superior to the GFT domain reconstruction. Instead, the energy concentration gain obtained by the fractional subspace must be sufficiently large to compensate for the additional residual leakage and noise amplification caused by the sampling inversion. Hence, the theoretical advantage of the GFRFT domain is achieved when a more suitable fractional subspace is selected and the corresponding sampled subspace remains stably invertible.

\subsection{Prior Dependent Interpretation of the Reconstruction Condition}
\label{subsec:prior_dependent_reconstruction_condition}

Theorem~\ref{thm:fractional_reconstruction_advantage} shows that the reconstruction advantage of the GFRFT domain is governed by three terms: the projection residual \(\|\mathbf{r}_{K,a}\|_2^2\), the residual leakage coefficient \(\tau_a\), and the noise amplification term \(\nu_a\). Equivalently, the fractional domain is favored when
\begin{equation}
	\mathcal{E}_a
	=
	\left(
	1+\tau_a^2
	\right)
	\left\|
	\mathbf{r}_{K,a}
	\right\|_2^2
	+
	\nu_a
	\label{eq:prior_dependent_error_indicator}
\end{equation}
is sufficiently smaller than the GFT domain reconstruction error. Therefore, different priors can be interpreted according to which component of \(\mathcal{E}_a\) they mainly affect.

Under the subspace prior, the signal is assumed to be mainly represented by a low-dimensional spectral subspace. This prior directly acts on the projection residual \(\|\mathbf{r}_{K,a}\|_2^2\). If the fractional subspace \(\mathcal{U}_{K,a}\) better matches the intrinsic signal subspace than the GFT subspace \(\mathcal{U}_{K,1}\), then
\begin{equation*}
	\left\|
	\mathbf{r}_{K,a}
	\right\|_2^2
	<
	\left\|
	\mathbf{r}_{K,1}
	\right\|_2^2 .
	\label{eq:subspace_prior_residual_reduction}
\end{equation*}
Thus, the subspace prior mainly reduces the approximation error caused by low-dimensional truncation. However, this reduction improves the final reconstruction only when the effective sampling matrix remains stable, so that \(\tau_a\) and \(\nu_a\) are not significantly amplified.

Under the smoothness prior, the signal is encouraged to vary smoothly over the graph. This prior affects both the projection residual and the residual leakage. An appropriate fractional order can concentrate the smooth component into fewer dominant fractional spectral components, thereby reducing \(\|\mathbf{r}_{K,a}\|_2^2\). Meanwhile, the residual component outside the selected subspace usually corresponds to higher graph variations. If such components are suppressed more effectively in the fractional domain, their leakage after sampling and inversion can also be reduced. Hence, the smoothness prior mainly controls the combined term
\begin{equation*}
	\left(
	1+\tau_a^2
	\right)
	\left\|
	\mathbf{r}_{K,a}
	\right\|_2^2 .
	\label{eq:smoothness_prior_controlled_term}
\end{equation*}

Under the stochastic prior, the graph signal is modeled as a random vector. In this case, the representation error is evaluated in an average sense. The expected residual energy outside the fractional subspace is
\begin{equation*}
	\mathbb{E}
	\left[
	\left\|
	\mathbf{r}_{K,a}
	\right\|_2^2
	\right]
	=
	\operatorname{tr}
	\left[
	\left(
	\mathbf{I}
	-
	\mathbf{P}_{K,a}
	\right)
	\boldsymbol{\Gamma}_x
	\right] .
	\label{eq:stochastic_prior_expected_residual}
\end{equation*}
Therefore, the stochastic prior mainly reduces the expected projection residual by selecting a fractional subspace that captures the dominant covariance energy. When statistical reconstruction is further considered, the covariance information can also help balance signal recovery and noise suppression, thereby affecting the practical influence of \(\nu_a\).

In summary, the three priors explain the reconstruction condition from different perspectives. The subspace prior emphasizes low-dimensional approximation, the smoothness prior emphasizes the suppression of high variation residual components, and the stochastic prior emphasizes average energy concentration and noise robustness. Therefore, the GFRFT domain advantage depends on whether the selected fractional subspace is consistent with the signal prior and remains stable under the sampling operator.

\section{Numerical experiments}
\subsection{Experimental Settings}

This section evaluates the effectiveness of the GFRFT domain sampling and reconstruction framework using both simulated graph signals and real traffic sensor graph signals. The main comparison is conducted between two reconstruction settings. The first setting fixes the fractional order at \(a=1\), which corresponds to the GFT domain. The second setting searches for the optimal fractional order \(a^\ast\) from a predefined candidate set and performs reconstruction in the corresponding GFRFT domain. By comparing the reconstruction errors under these two settings, we examine whether the adjustable fractional spectral domain can improve graph signal reconstruction performance. 

Two spectral response functions are used in the experiments:
\begin{equation}
	G_1(\lambda)=1-\frac{\lambda}{\lambda_{\max}+\varepsilon},
\end{equation}
\begin{equation}
	G_2(\lambda)=\exp\left(-\frac{\lambda}{\lambda_{\max}}\right),
\end{equation}
where \(\lambda_{\max}\) denotes the largest eigenvalue of the graph Laplacian, and \(\varepsilon=0.3\). The function \(G_1\) is a linearly decaying response, while \(G_2\) is an exponentially decaying response. Both functions are used to characterize spectral weighting in the sampling and reconstruction processes.

In the predefined recovery variants, an additional predefined spectral reconstruction response is introduced as
\begin{equation}
	W(\lambda_i)
	=
	\cos\left(
	\frac{\lambda_i}{\lambda_{\max}+\varepsilon}
	\right),
	\quad i=1,\ldots,N .
\end{equation}
This response is used as a fixed spectral weighting function in the PD reconstruction methods. 

The reconstruction error is measured by the logarithmic mean-squared error:
\begin{equation}
	\mathrm{MSE}_{\mathrm{dB}}
	=
	10\log_{10}
	\left(
	\frac{1}{N}
	\sum_{i=1}^{N}
	|x_i-\hat{x}_i|^2
	\right),
\end{equation}
where \(x_i\) denotes the original graph signal value at the \(i\)-th node, and \(\hat{x}_i\) denotes the corresponding reconstructed value. A lower \(\mathrm{MSE}_{\mathrm{dB}}\) indicates a smaller reconstruction error. 
To quantify the benefit of optimizing the fractional order, the reconstruction improvement is defined as
\begin{equation}
	\Delta \mathrm{MSE}
	=
	\mathrm{MSE}_{a=1}
	-
	\mathrm{MSE}_{a=a^\ast}.
\end{equation}
Here, \(\mathrm{MSE}_{a=1}\) denotes the reconstruction error in the GFT domain, and \(\mathrm{MSE}_{a=a^\ast}\) denotes the minimum error obtained over the searched fractional orders. \(\Delta \mathrm{MSE}\) quantifies the error reduction achieved by the optimized GFRFT domain.

The fractional order is searched over the following candidate set:
\begin{equation*}
	a\in\{0,0.02,0.04,\ldots,1.00\}.
\end{equation*}
The case \(a=1\) is used as the GFT domain baseline, while the optimal order \(a^\ast\) corresponds to the best reconstruction result in the GFRFT domain. To ensure reproducibility, all experiments are conducted with fixed random seeds.

\subsection{Simulated Experiments}

\subsubsection{Simulated Graph Structure and Signal Construction}

The simulated experiments are conducted on a sensor graph. Each node represents a sensor located in a two dimensional space, and the edges are determined according to spatial proximity. The number of graph nodes is set to $N=128$.

Two types of simulated graph signals are used in the experiments:
\begin{itemize}
	\item \textbf{Near-subspace signal:} A signal whose energy is mainly concentrated on a few spectral components, with a small spectral tail retained.
	
	\item \textbf{Smooth-local-anomaly signal:} A signal composed of a smooth graph component and several localized anomalies.
\end{itemize}

The simulated experiments are divided into noiseless and noisy cases. In the noiseless case, the clean graph signals are directly sampled and reconstructed. In the noisy case, vertex domain Gaussian white noise is added to the clean signal:
\begin{equation*}
	x_{\mathrm{noisy}}=x_{\mathrm{clean}}+n,
\end{equation*}
\begin{equation*}
	n\sim \mathcal{N}(0,\sigma^2I).
\end{equation*}
The noise standard deviation is set to \(\sigma=0.10\).

Fig.~\ref{fig:simulated_signal_visualization} visualizes the two simulated graph signals under noiseless and noisy settings. The first two panels correspond to the near-subspace signal before and after adding noise, while the last two panels correspond to the smooth-local-anomaly signal before and after adding noise.
\begin{figure*}[!t]
	\centering
	\setlength{\tabcolsep}{1.2pt}
	\renewcommand{\arraystretch}{0.95}
	
	\begin{tabular}{cccc}
		\includegraphics[width=0.235\textwidth]{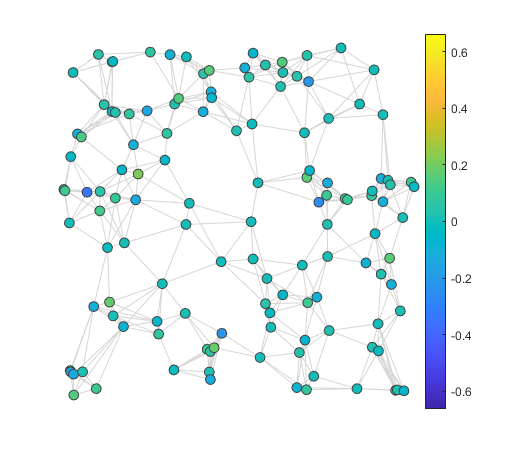} &
		\includegraphics[width=0.235\textwidth]{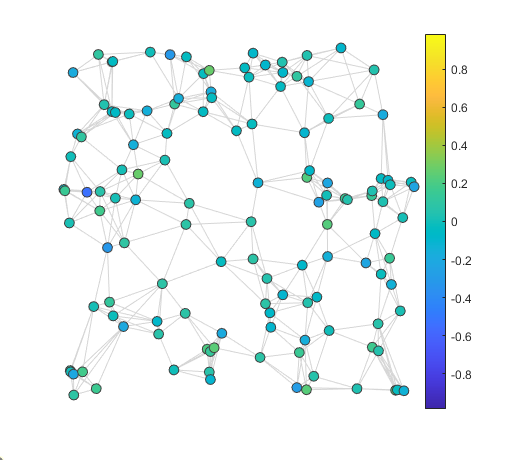} &
		\includegraphics[width=0.235\textwidth]{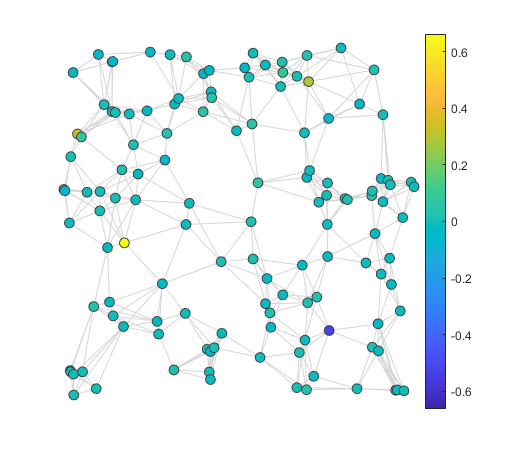} &
		\includegraphics[width=0.235\textwidth]{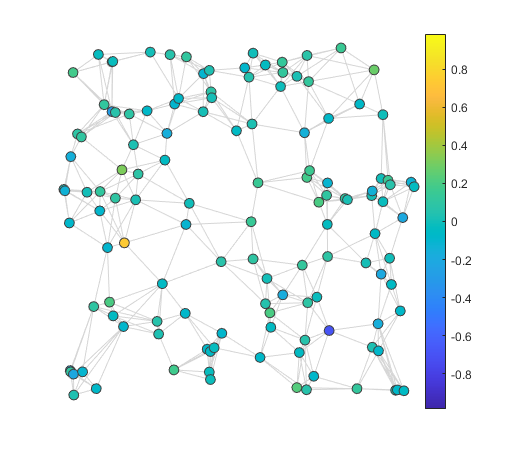} \\
		\footnotesize (a) Near-subspace &
		\footnotesize (b) Noisy near-subspace &
		\footnotesize (c) Smooth-local &
		\footnotesize (d) Noisy smooth-local
	\end{tabular}
	
	\vspace{1mm}
	\refstepcounter{figure}
	\label{fig:simulated_signal_visualization}
	{\centering\footnotesize Fig.~\thefigure. Visualization of the simulated graph signals under noiseless and noisy settings.\par}
\end{figure*}

\subsubsection{Results on Noiseless Simulated Signals}

The noiseless experiments are conducted under two spectral response functions, \(G_1\) and \(G_2\). For each response function, both the near-subspace signal and the smooth-local-anomaly mixed signal are tested. 

Table~\ref{tab:noiseless_simulated_summary} summarizes the reconstruction errors of different methods under the noiseless simulated settings. The results show that, for most reconstruction methods, the optimized fractional order \(a^\ast\) leads to lower MSE values than the GFT case with \(a=1\). This verifies that the GFRFT domain can provide a more suitable spectral representation for graph signal reconstruction in several noiseless cases. Across the two noiseless signal settings, the optimized fractional order generally reduces the reconstruction error compared with the fixed \(a=1\) case. The amount of improvement varies with the signal type, spectral response, and reconstruction prior. In particular, the results show that the benefit of the GFRFT domain is more pronounced when the GFT domain is less matched to the signal structure, while some methods obtain \(a^\ast=1\), indicating that the GFT domain is already the optimal choice for those specific settings.

Fig.~\ref{fig:noiseless_reconstruction_visualization} visualizes the representative noiseless reconstruction results. For each simulated signal, the reconstructed signal maps are first compared with the corresponding input signal to show the overall recovery quality. The node wise reconstruction error maps further display the absolute reconstruction error at each graph node, thereby indicating where the reconstruction deviates more strongly from the original signal. In addition, the error improvement map shows the difference between the GFT domain reconstruction error and the optimized GFRFT domain reconstruction error. Positive values indicate that the optimized fractional domain yields a smaller local reconstruction error, while negative values indicate that the GFT domain performs better at the corresponding node.

\begin{figure*}[!t]
	\centering
	\setlength{\tabcolsep}{1.2pt}
	\renewcommand{\arraystretch}{0.95}
	
	\begin{tabular}{ccccc}
		\includegraphics[width=0.200\textwidth]{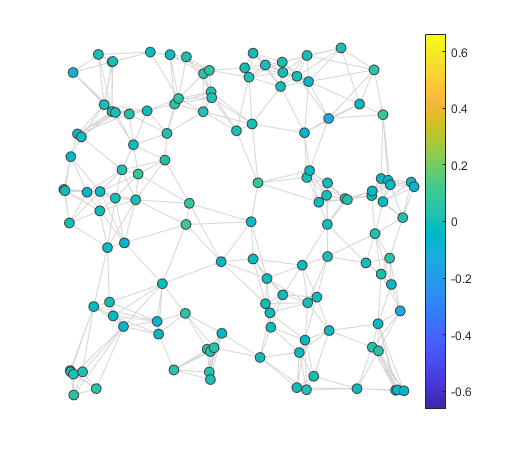} &
		\includegraphics[width=0.200\textwidth]{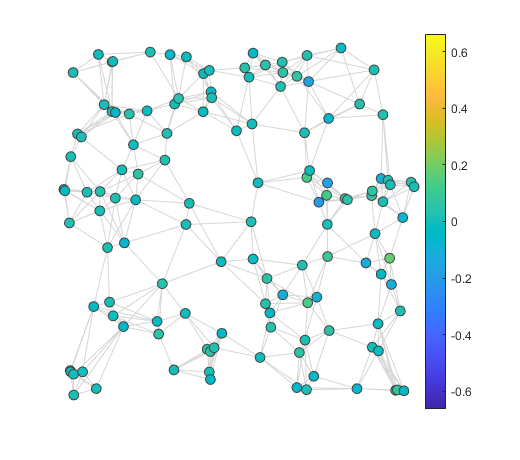} &
		\includegraphics[width=0.200\textwidth]{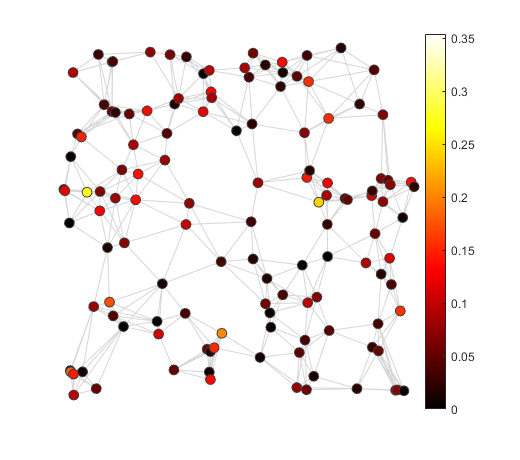} &
		\includegraphics[width=0.200\textwidth]{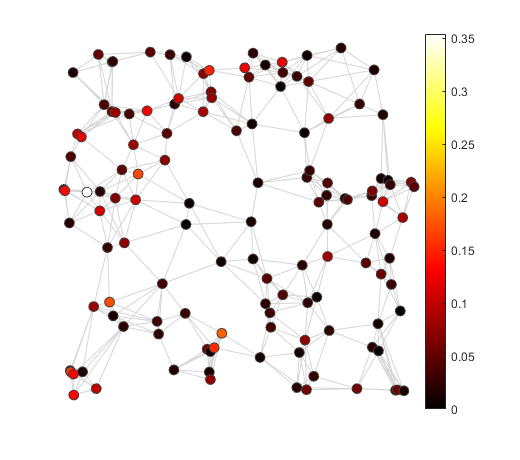} &
		\includegraphics[width=0.200\textwidth]{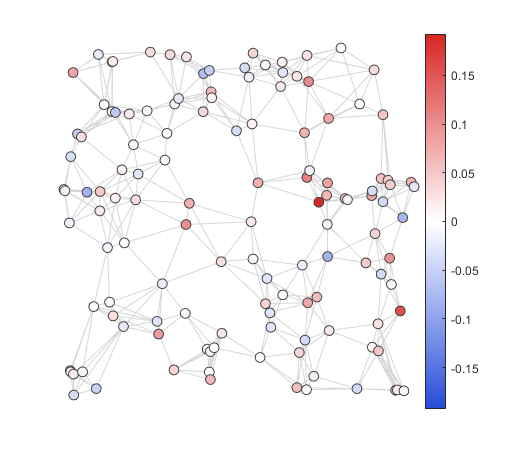} \\
		\footnotesize (a) NS input &
		\footnotesize (b) NS rec. &
		\footnotesize (c) NS GFT err. &
		\footnotesize (d) NS GFRFT err. &
		\footnotesize (e) NS improve \\
		
		\includegraphics[width=0.200\textwidth]{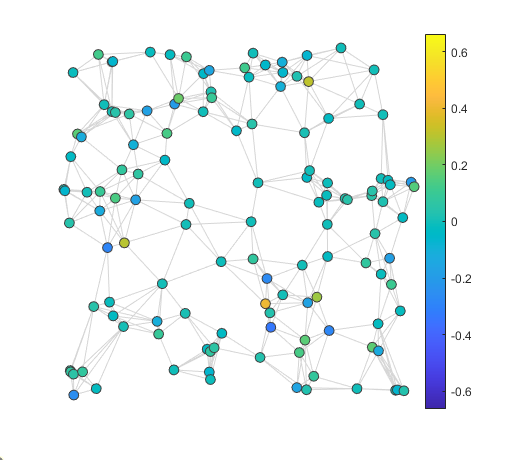} &
		\includegraphics[width=0.200\textwidth]{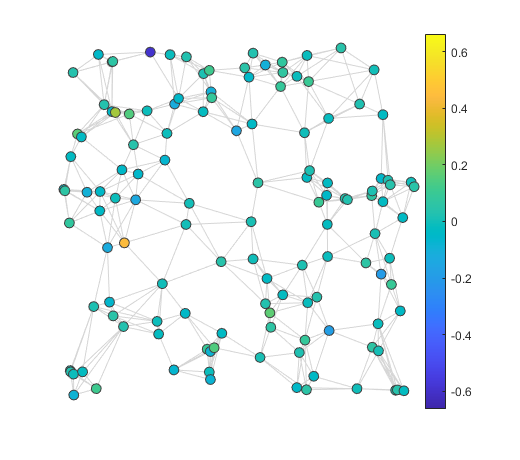} &
		\includegraphics[width=0.200\textwidth]{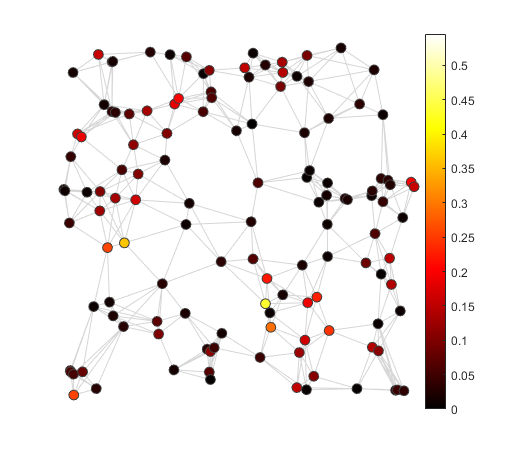} &
		\includegraphics[width=0.200\textwidth]{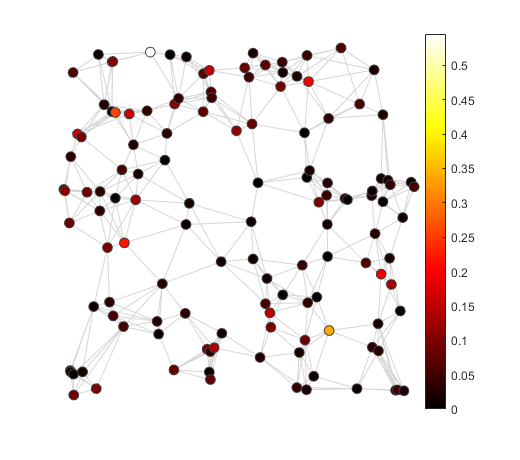} &
		\includegraphics[width=0.200\textwidth]{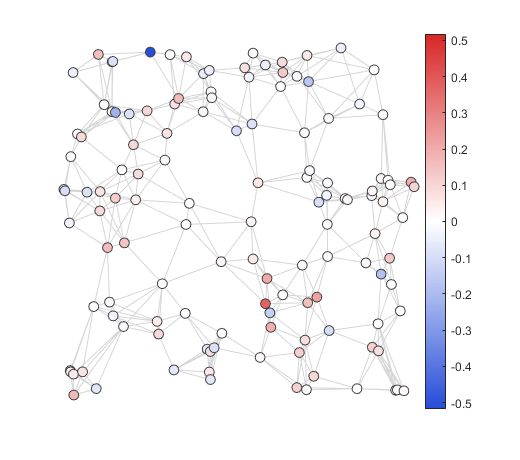} \\
		\footnotesize (f) SL input &
		\footnotesize (g) SL rec. &
		\footnotesize (h) SL GFT err. &
		\footnotesize (i) SL GFRFT err. &
		\footnotesize (j) SL improve
	\end{tabular}
	
	\caption{Representative visualization results for the noiseless simulated experiments. The first row shows the near-subspace case under the \(G_1\) response with SM UNC reconstruction, and the second row shows the smooth-local-anomaly case under the \(G_2\) response with ST UNC reconstruction.}
	\label{fig:noiseless_reconstruction_visualization}
\end{figure*}

\begin{table*}[!t]
	\centering
	\caption{Reconstruction Errors and Improvements for Noiseless Simulated Signals under \(G_1\) and \(G_2\).}
	\label{tab:noiseless_simulated_summary}
	\scriptsize
	\renewcommand{\arraystretch}{1.20}
	\setlength{\tabcolsep}{0pt}
	\begin{tabular*}{\textwidth}{@{\extracolsep{\fill}}lcccccccc@{}}
		\toprule
		\textbf{Method}
		& \multicolumn{4}{c}{\textbf{\(G_1\)}} 
		& \multicolumn{4}{c}{\textbf{\(G_2\)}} \\
		\cmidrule(lr){2-5} \cmidrule(lr){6-9}
		& \(a=1\) MSE (dB)
		& \(a^\ast\) MSE (dB)
		& \(a^\ast\)
		& Imp. (dB)
		& \(a=1\) MSE (dB)
		& \(a^\ast\) MSE (dB)
		& \(a^\ast\)
		& Imp. (dB) \\
		\midrule
		
		\multicolumn{9}{@{}l}{\textit{Near-subspace signal}} \\
		SS UNC      & \(-22.0498\) & \(-23.0740\) & \(0.26\) & \(1.0242\) & \(-22.5292\) & \(-23.3880\) & \(0.36\) & \(0.8588\) \\
		SS PD & \(-23.0729\) & \(-23.5454\) & \(0.54\) & \(0.4725\) & \(-23.3380\) & \(-23.8352\) & \(0.54\) & \(0.4972\) \\
		SM UNC      & \(-21.4784\) & \(-22.8736\) & \(0.14\) & \(1.3952\) & \(-21.6883\) & \(-22.9269\) & \(0.22\) & \(1.2385\) \\
		SM PD    & \(-23.0298\) & \(-23.5104\) & \(0.54\) & \(0.4805\) & \(-23.3141\) & \(-23.8172\) & \(0.54\) & \(0.5031\) \\
		ST UNC      & \(-27.7226\) & \(-27.7226\) & \(1.00\) & \(0.0000\) & \(-28.1012\) & \(-28.1012\) & \(1.00\) & \(0.0000\) \\
		ST PD  & \(-23.2845\) & \(-23.5502\) & \(0.66\) & \(0.2657\) & \(-23.4329\) & \(-23.8314\) & \(0.58\) & \(0.3985\) \\
		
		\addlinespace[2pt]
		\midrule
		
		\multicolumn{9}{@{}l}{\textit{Smooth-local-anomaly signal}} \\
		SS UNC      & \(-27.6358\) & \(-27.7555\) & \(0.94\) & \(0.1198\) & \(-26.1072\) & \(-26.2997\) & \(0.90\) & \(0.1925\) \\
		SS PD & \(-24.8767\) & \(-25.1250\) & \(0.84\) & \(0.2483\) & \(-24.9727\) & \(-25.2319\) & \(0.84\) & \(0.2592\) \\
		SM UNC      & \(-28.6939\) & \(-28.7598\) & \(0.96\) & \(0.0659\) & \(-27.5025\) & \(-27.6286\) & \(0.94\) & \(0.1261\) \\
		SM PD    & \(-24.9163\) & \(-25.1557\) & \(0.86\) & \(0.2394\) & \(-24.9878\) & \(-25.2446\) & \(0.84\) & \(0.2568\) \\
		ST UNC      & \(-18.6240\) & \(-20.0565\) & \(0.34\) & \(1.4325\) & \(-19.3742\) & \(-20.7978\) & \(0.32\) & \(1.4236\) \\
		ST PD  & \(-24.1143\) & \(-24.3802\) & \(0.82\) & \(0.2659\) & \(-24.6642\) & \(-24.9402\) & \(0.82\) & \(0.2761\) \\
		\bottomrule
	\end{tabular*}
\end{table*}
\subsubsection{Results on Noisy Simulated Signals}

The noisy experiments are performed by adding Gaussian white noise to the original graph signals. The remaining settings are the same as those in the noiseless experiments, including two signal types, two spectral response functions, and six reconstruction methods.

Table~\ref{tab:noisy_simulated_summary} summarizes the reconstruction results under noisy simulated settings. Compared with the noiseless case, the added Gaussian noise increases the reconstruction difficulty, but the optimized fractional order \(a^\ast\) still reduces the MSE for most reconstruction methods. This indicates that the GFRFT domain can provide a more adaptive spectral representation even when the observed graph signals are corrupted by noise. The improvement varies across signal types, spectral responses, and reconstruction priors.

Fig.~\ref{fig:noisy_reconstruction_visualization} shows the representative noisy reconstruction results. For each case, the visualization presents the noisy input or reconstructed signal, the node wise reconstruction errors, and the corresponding error improvement map.

\begin{figure*}[!t]
	\centering
	\setlength{\tabcolsep}{0pt}
	\renewcommand{\arraystretch}{0.95}
	
	\begin{tabular}{@{}ccccc@{}}
		\includegraphics[width=0.200\textwidth]{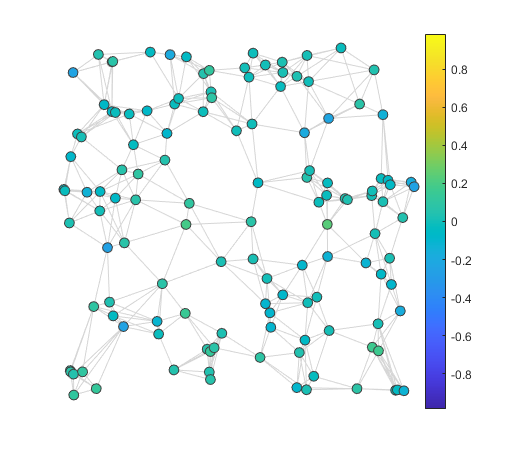} &
		\includegraphics[width=0.200\textwidth]{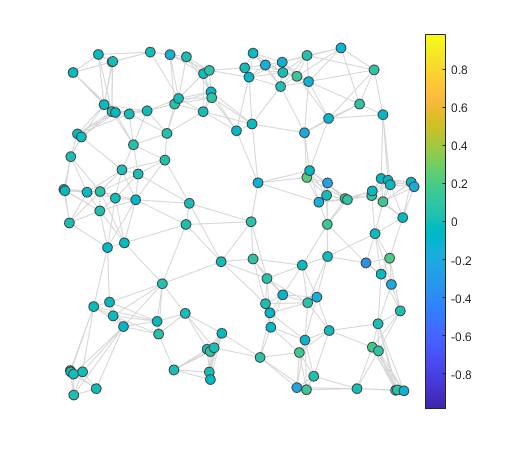} &
		\includegraphics[width=0.200\textwidth]{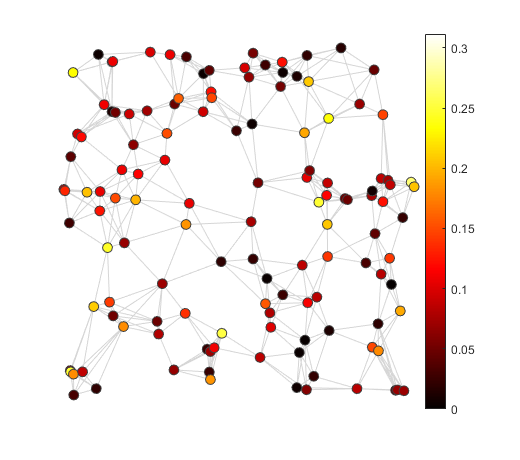} &
		\includegraphics[width=0.200\textwidth]{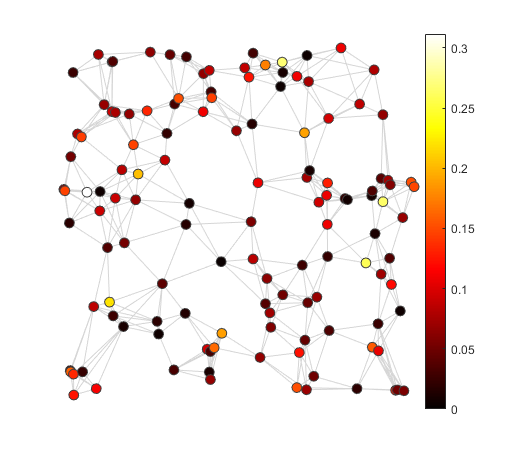} &
		\includegraphics[width=0.200\textwidth]{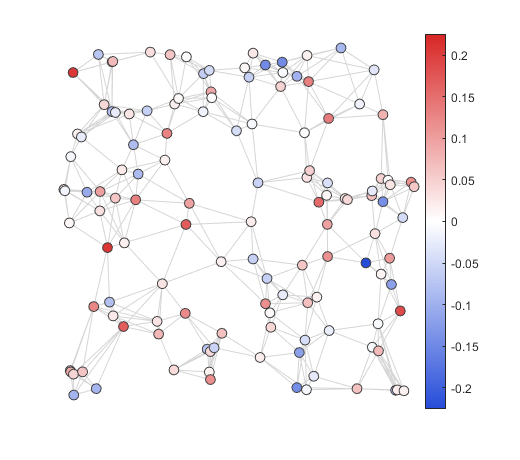} \\
		\footnotesize (a) NS input &
		\footnotesize (b) NS rec. &
		\footnotesize (c) NS GFT err. &
		\footnotesize (d) NS GFRFT err. &
		\footnotesize (e) NS improve \\
		
		\includegraphics[width=0.200\textwidth]{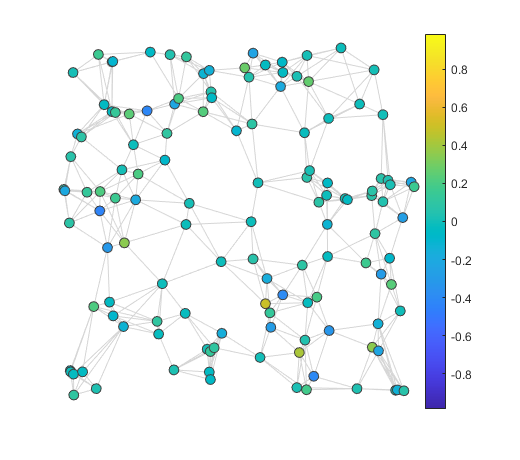} &
		\includegraphics[width=0.200\textwidth]{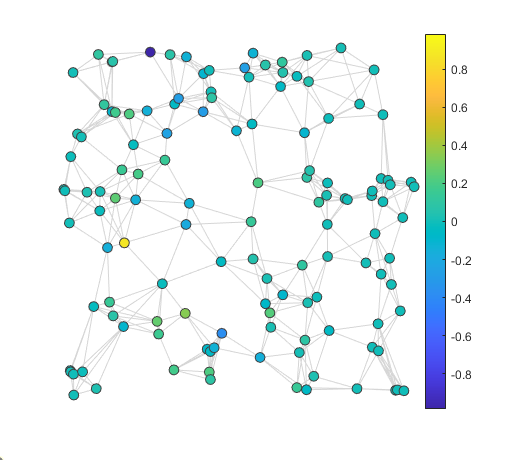} &
		\includegraphics[width=0.200\textwidth]{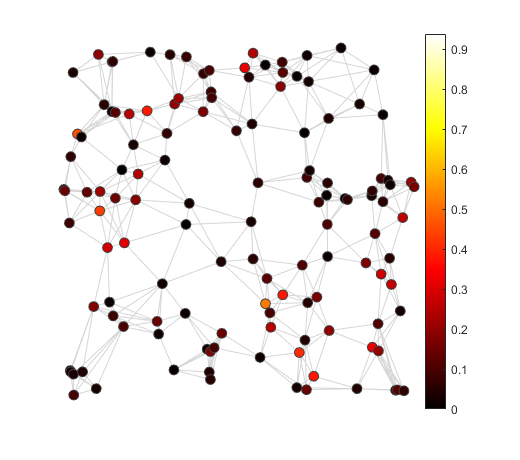} &
		\includegraphics[width=0.200\textwidth]{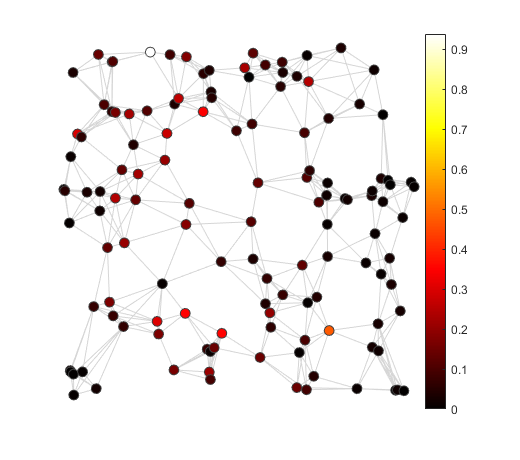} &
		\includegraphics[width=0.200\textwidth]{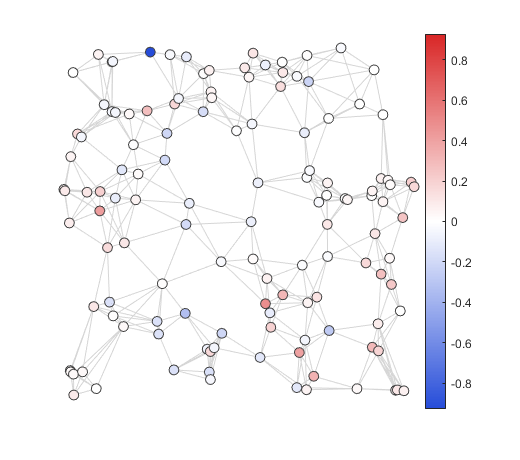} \\
		\footnotesize (f) SL input &
		\footnotesize (g) SL rec. &
		\footnotesize (h) SL GFT err. &
		\footnotesize (i) SL GFRFT err. &
		\footnotesize (j) SL improve
	\end{tabular}
	
	\caption{Representative visualization results for the noisy simulated experiments. The first row shows the near-subspace case under the \(G_1\) response with SM UNC reconstruction, and the second row shows the smooth-local-anomaly case under the \(G_2\) response with ST UNC reconstruction.}
	\label{fig:noisy_reconstruction_visualization}
\end{figure*}

\begin{table*}[!t]
	\centering
	\caption{Reconstruction Errors and Improvements for Noisy Simulated Signals under \(G_1\) and \(G_2\).}
	\label{tab:noisy_simulated_summary}
	\scriptsize
	\renewcommand{\arraystretch}{1.20}
	\setlength{\tabcolsep}{0pt}
	\begin{tabular*}{\textwidth}{@{\extracolsep{\fill}}lcccccccc@{}}
		\toprule
		\textbf{Method}
		& \multicolumn{4}{c}{\textbf{\(G_1\)}} 
		& \multicolumn{4}{c}{\textbf{\(G_2\)}} \\
		\cmidrule(lr){2-5} \cmidrule(lr){6-9}
		& \(a=1\) MSE (dB)
		& \(a^\ast\) MSE (dB)
		& \(a^\ast\)
		& Imp. (dB)
		& \(a=1\) MSE (dB)
		& \(a^\ast\) MSE (dB)
		& \(a^\ast\)
		& Imp. (dB) \\
		\midrule
		
		\multicolumn{9}{@{}l}{\textit{Near-subspace signal}} \\
		SS UNC      & \(-19.2872\) & \(-20.1525\) & \(0.14\) & \(0.8652\) & \(-19.4919\) & \(-20.2138\) & \(0.22\) & \(0.7219\) \\
		SS PD & \(-20.1280\) & \(-20.3723\) & \(0.32\) & \(0.2443\) & \(-20.0311\) & \(-20.3695\) & \(0.36\) & \(0.3384\) \\
		SM UNC      & \(-18.8446\) & \(-20.0471\) & \(0.10\) & \(1.2025\) & \(-18.9027\) & \(-19.9756\) & \(0.16\) & \(1.0728\) \\
		SM PD    & \(-20.2658\) & \(-20.5108\) & \(0.32\) & \(0.2450\) & \(-20.1330\) & \(-20.4705\) & \(0.36\) & \(0.3374\) \\
		ST UNC      & \(-18.7363\) & \(-18.7363\) & \(1.00\) & \(0.0000\) & \(-19.1302\) & \(-19.1302\) & \(1.00\) & \(0.0000\) \\
		ST PD  & \(-19.4025\) & \(-19.4754\) & \(0.32\) & \(0.0728\) & \(-19.6106\) & \(-19.8613\) & \(0.34\) & \(0.2507\) \\
		
		\addlinespace[2pt]
		\midrule
		
		\multicolumn{9}{@{}l}{\textit{Smooth--local-anomaly signal}} \\
		SS UNC      & \(-21.0276\) & \(-21.1642\) & \(0.86\) & \(0.1366\) & \(-20.5054\) & \(-20.7397\) & \(0.76\) & \(0.2343\) \\
		SS PD & \(-20.3948\) & \(-20.6860\) & \(0.66\) & \(0.2913\) & \(-20.1909\) & \(-20.5202\) & \(0.62\) & \(0.3293\) \\
		SM UNC      & \(-21.1963\) & \(-21.3080\) & \(0.88\) & \(0.1116\) & \(-20.8021\) & \(-20.9696\) & \(0.84\) & \(0.1675\) \\
		SM PD    & \(-20.6113\) & \(-20.8917\) & \(0.68\) & \(0.2804\) & \(-20.3317\) & \(-20.6609\) & \(0.64\) & \(0.3292\) \\
		ST UNC      & \(-15.4950\) & \(-15.7830\) & \(0.00\) & \(0.2880\) & \(-16.1010\) & \(-16.4200\) & \(0.00\) & \(0.3190\) \\
		ST PD  & \(-19.0792\) & \(-19.2385\) & \(0.76\) & \(0.1593\) & \(-19.5581\) & \(-19.7727\) & \(0.72\) & \(0.2146\) \\
		\bottomrule
	\end{tabular*}
\end{table*}

\subsection{Real Data Experiments}

\subsubsection{Data Source and Experimental Setting}

To further evaluate the applicability of the proposed method to real graph signals, the PeMS-BAY traffic sensor dataset~\cite{43} is used in the real data experiment. The dataset is constructed from a traffic sensor network, where each sensor corresponds to a graph node and the road connectivity between sensors is used to define the graph adjacency matrix. Different from the simulated experiments, the graph signal is not synthetically generated. Instead, a real traffic observation at a specific time step is selected as the target graph signal to be reconstructed.

To ensure that the sampling number \(K\) is compatible with the graph size, \(N=320\) nodes are selected from the original sensor set in the main real data experiment. The target graph signal is extracted from the first feature column of the PeMS-BAY observation matrix at the 10000-th time step. The spectral response functions \(G_1\) and \(G_2\) are kept the same as those used in the simulated experiments. A fixed random seed is used to ensure reproducibility.

\subsubsection{Reconstruction Methods on Real Data}

In the main real-data experiment, the sampling number is fixed as $K=64$,
which corresponds to the sampling ratio
\begin{equation*}
	\frac{K}{N}=\frac{64}{320}=0.2.
\end{equation*}
Thus, only 20\% of the graph samples are used to reconstruct the complete traffic sensor graph signal.

In addition to the SS-, SM-, and ST-based reconstruction methods, three baseline methods are included for comparison, namely BL-GFT~\cite{25}, Laplacian-Tikhonov~\cite{41} and Graph KRR~\cite{42}. These baselines are used to evaluate the fractional domain reconstruction framework against representative graph signal reconstruction schemes.

\subsubsection{Real Data Experimental Results}

Table~\ref{tab:real_data_k64_summary} reports the reconstruction results on the PeMS-BAY dataset with \(K=64\). The optimized fractional order generally yields lower MSE values than the fixed \(a=1\) setting, indicating that the GFRFT domain can also improve reconstruction performance on real traffic sensor graph signals. The improvement is method dependent: some methods show relatively small gains, while the stochastic prior based unconstrained method achieves a more evident reduction under both \(G_1\) and \(G_2\). Compared with the baseline methods, the GFRFT domain reconstruction methods obtain lower reconstruction errors in most cases, which supports the applicability of the proposed framework to real graph signal reconstruction.

\begin{table*}[!t]
	\centering
	\caption{Reconstruction Errors and Improvements for Real PeMS-BAY Data with \(K=64\).}
	\label{tab:real_data_k64_summary}
	\scriptsize
	\renewcommand{\arraystretch}{1.12}
	\setlength{\tabcolsep}{0pt}
	\begin{tabular*}{\textwidth}{@{\extracolsep{\fill}}lcccccccc@{}}
		\toprule
		\textbf{Method}
		& \multicolumn{4}{c}{\textbf{\(G_1\)}} 
		& \multicolumn{4}{c}{\textbf{\(G_2\)}} \\
		\cmidrule(lr){2-5} \cmidrule(lr){6-9}
		& \(a=1\) MSE (dB)
		& Best MSE (dB)
		& \(a^\ast\)
		& Imp. (dB)
		& \(a=1\) MSE (dB)
		& Best MSE (dB)
		& \(a^\ast\)
		& Imp. (dB) \\
		\midrule
		
		SS UNC & \(-25.9884\) & \(-26.0036\) & \(0.88\) & \(0.0152\)
		& \(-25.9195\) & \(-25.9228\) & \(0.92\) & \(0.0033\) \\
		SS PD  & \(-25.7804\) & \(-25.7839\) & \(0.90\) & \(0.0035\)
		& \(-25.8414\) & \(-25.8424\) & \(0.94\) & \(0.0010\) \\
		SM UNC & \(-26.0552\) & \(-26.0681\) & \(0.90\) & \(0.0129\)
		& \(-25.9784\) & \(-25.9865\) & \(0.90\) & \(0.0081\) \\
		SM PD  & \(-25.7836\) & \(-25.7874\) & \(0.88\) & \(0.0038\)
		& \(-25.8421\) & \(-25.8432\) & \(0.94\) & \(0.0011\) \\
		ST UNC & \(-25.9626\) & \(-26.8479\) & \(0.04\) & \(0.8853\)
		& \(-25.8975\) & \(-26.7713\) & \(0.04\) & \(0.8738\) \\
		ST PD  & \(-25.7811\) & \(-25.7825\) & \(0.92\) & \(0.0014\)
		& \(-25.8410\) & \(-25.8418\) & \(0.94\) & \(0.0008\) \\
		
		\midrule
		
		BL-GFT
		& -- & \(-24.7837\) & -- & --
		& -- & \(-24.7837\) & -- & -- \\
		Laplacian-Tikhonov
		& -- & \(-25.2105\) & -- & --
		& -- & \(-25.2105\) & -- & -- \\
		Graph KRR
		& -- & \(-25.5214\) & -- & --
		& -- & \(-25.5214\) & -- & -- \\
		
		\bottomrule
	\end{tabular*}
\end{table*}
\subsection{Sampling Rate Sensitivity Analysis}
\label{subsec:sampling_rate_sensitivity}

To analyze the influence of the sampling number on real graph signal reconstruction, we conduct a sampling number sensitivity analysis on the PeMS-BAY dataset by varying \(K\). Since the graph size is fixed as \(N=320\), three sampling numbers are considered:
\begin{equation*}
	K=32,\quad K=64,\quad K=160,
\end{equation*}
which correspond to sampling ratios of 0.1, 0.2, and 0.5, respectively. The case \(K=64\) is used as the main real data setting, while \(K=32\) and \(K=160\) are used to evaluate the reconstruction behavior under lower and higher sampling rates.

 All settings except the sampling number are kept unchanged, including the graph structure, the selected traffic signal, the spectral response functions, the fractional order search range, and the error metric. For each value of \(K\), the reconstruction error obtained with \(a=1\) is compared with the minimum error obtained at the optimized fractional order \(a^\ast\).

Table~\ref{tab:k_ablation_summary} summarizes the optimal reconstruction results for different sampling numbers. When \(K=32\), the available sampling information is limited, and the performance differences among different methods are relatively small. When \(K=160\), the reconstruction errors are substantially reduced, indicating that more sampled observations improve the recovery quality of real graph signals. Moreover, the optimal fractional order is not always equal to 1, which suggests that the GFT domain is not necessarily the optimal reconstruction domain under different sampling conditions. These results verify that the proposed GFRFT framework can adapt its spectral representation according to the sampling level.

Overall, the ablation results show that both the sampling number and the spectral response function influence the reconstruction performance. The improvement obtained by the GFRFT domain is therefore not determined by a fixed fractional order, but depends on the sampling condition, spectral response, and recovery model.
\begin{table*}[!t]
	\centering  
	\caption{Reconstruction Results on Real PeMS-BAY Data with Different Sampling Numbers \(K\)}
	\label{tab:k_ablation_summary}
	\scriptsize
	\renewcommand{\arraystretch}{1.12}
	\setlength{\tabcolsep}{0pt}
	\begin{tabular*}{\textwidth}{@{\extracolsep{\fill}}lcccccc@{}}
		\toprule
		\textbf{Method}
		& \multicolumn{3}{c}{\textbf{\(G_1\): \(a^\ast/\mathrm{MSE}_{\mathrm{dB}}\)}} 
		& \multicolumn{3}{c}{\textbf{\(G_2\): \(a^\ast/\mathrm{MSE}_{\mathrm{dB}}\)}} \\
		\cmidrule(lr){2-4} \cmidrule(lr){5-7}
		& \textbf{\(K=32\)}
		& \textbf{\(K=64\)}
		& \textbf{\(K=160\)}
		& \textbf{\(K=32\)}
		& \textbf{\(K=64\)}
		& \textbf{\(K=160\)} \\
		\midrule
		
		SS UNC 
		& \(0.84/{-25.6406}\)
		& \(0.92/{-25.9228}\)
		& \(0.46/{-28.0416}\)
		& \(0.84/{-25.6209}\)
		& \(0.92/{-25.9228}\)
		& \(0.50/{-28.1185}\) \\
		
		SS PD 
		& \(0.84/{-25.5467}\)
		& \(0.94/{-25.8424}\)
		& \(0.50/{-27.8502}\)
		& \(0.82/{-25.5684}\)
		& \(0.94/{-25.8424}\)
		& \(0.46/{-28.0629}\) \\
		
		SM UNC 
		& \(0.86/{-25.6774}\)
		& \(0.90/{-25.9865}\)
		& \(0.50/{-27.9962}\)
		& \(0.86/{-25.6582}\)
		& \(0.90/{-25.9865}\)
		& \(0.52/{-28.0262}\) \\
		
		SM PD 
		& \(0.84/{-25.5477}\)
		& \(0.94/{-25.8432}\)
		& \(0.50/{-27.8470}\)
		& \(0.82/{-25.5690}\)
		& \(0.94/{-25.8432}\)
		& \(0.46/{-28.0627}\) \\
		
		ST UNC 
		& \(0.92/{-25.6223}\)
		& \(0.04/{-26.7713}\)
		& \(0.06/{-29.8955}\)
		& \(0.10/{-25.5936}\)
		& \(0.04/{-26.7713}\)
		& \(0.08/{-29.3739}\) \\
		
		ST PD 
		& \(0.84/{-25.5466}\)
		& \(0.94/{-25.8418}\)
		& \(0.48/{-27.8491}\)
		& \(0.82/{-25.5687}\)
		& \(0.94/{-25.8418}\)
		& \(0.44/{-28.0659}\) \\
		
		\bottomrule
	\end{tabular*}
\end{table*}

\section{Conclusion}

In this paper, a graph signal sampling and reconstruction framework based on the GFRFT domain was investigated. By comparing the GFT domain with the optimized GFRFT domain, the experimental results demonstrate that the introduction of an adjustable fractional order can provide a more adaptive spectral representation for graph signal reconstruction. The results show that the optimized fractional domain generally achieves lower reconstruction errors than the conventional GFT domain. In addition, the ablation study with different sampling numbers further indicates that the optimal fractional order is affected by the sampling condition, spectral response, and reconstruction model, rather than being fixed in advance. These findings verify the effectiveness and flexibility of the proposed GFRFT based reconstruction framework, and suggest that fractional spectral representations can serve as a promising tool for graph signal sampling and recovery tasks.

\appendices

\section{Proofs of Propositions~\ref{prop:subspace_based_reconstruction}--\ref{prop:stochastic_reconstruction}}
\label{app:proof_propositions}

The three propositions follow from least-squares reconstruction, projection onto a prescribed reconstruction space, and second order statistical linear estimation. Let
\[
\boldsymbol{\Pi}_a
=
\boldsymbol{\Phi}_a
\left(
\boldsymbol{\Phi}_a^{\mathrm{H}}
\boldsymbol{\Phi}_a
\right)^{-1}
\boldsymbol{\Phi}_a^{\mathrm{H}}
\]
denote the least-squares projection onto the column space of
\(\boldsymbol{\Phi}_a\). Once the unconstrained reconstruction is obtained,
the predefined reconstruction is obtained by applying
\(\boldsymbol{\Pi}_a\) to the corresponding unconstrained result.

For Proposition~\ref{prop:subspace_based_reconstruction}, the subspace prior gives
\(\mathbf{x}=\mathbf{M}\mathbf{d}\) and, in the noiseless case,
\(\mathbf{y}_a=\mathbf{S}_a^{\ast}\mathbf{M}\mathbf{d}\). Since
\(\mathbf{S}_a^{\ast}\mathbf{M}\) has full column rank,
\[
\left(
\mathbf{S}_a^{\ast}\mathbf{M}
\right)^{\dagger}
\left(
\mathbf{S}_a^{\ast}\mathbf{M}
\right)
=
\mathbf{I}.
\]
Thus,
\[
\widetilde{\mathbf{x}}_{a,\mathrm{SS}}^{\mathrm{UNC}}
=
\mathbf{M}
\left(
\mathbf{S}_a^{\ast}\mathbf{M}
\right)^{\dagger}
\mathbf{y}_a
=
\mathbf{M}\mathbf{d}
=
\mathbf{x},
\]
which gives \eqref{eq:subspace_reconstruction} and
\eqref{eq:subspace_unc_hw}. Applying \(\boldsymbol{\Pi}_a\) to
\eqref{eq:subspace_reconstruction} gives
\eqref{eq:subspace_pre_h} and \eqref{eq:subspace_pre_reconstruction}.

For Proposition~\ref{prop:smoothness_based_reconstruction}, the smoothness aaainduced reconstruction is written as
\(\widetilde{\mathbf{x}}_{a,\mathrm{SM}}^{\mathrm{UNC}}
=
\mathbf{R}_{a,\mathrm{SM}}\widehat{\mathbf{c}}\). The full column rank condition of
\(\mathbf{S}_a^{\ast}\mathbf{R}_{a,\mathrm{SM}}\) gives the least-squares solution
\[
\widehat{\mathbf{c}}
=
\left(
\mathbf{S}_a^{\ast}
\mathbf{R}_{a,\mathrm{SM}}
\right)^{\dagger}
\mathbf{y}_a .
\]
Substitution gives \eqref{eq:smoothness_unc_reconstruction} and
\eqref{eq:smoothness_unc_hw}. Applying \(\boldsymbol{\Pi}_a\) gives
\eqref{eq:smoothness_pre_h} and \eqref{eq:smoothness_pre_reconstruction}.

For Proposition~\ref{prop:stochastic_reconstruction}, the unconstrained stochastic linear estimator can be written from the second-order statistics as
\[
\widetilde{\mathbf{x}}_{a,\mathrm{ST}}^{\mathrm{UNC}}
=
\mathbb{E}
\left[
\mathbf{x}\mathbf{y}_a^{\mathrm{H}}
\right]
\left(
\mathbb{E}
\left[
\mathbf{y}_a\mathbf{y}_a^{\mathrm{H}}
\right]
\right)^{-1}
\mathbf{y}_a .
\]
Using \eqref{eq:gfrft_sampled_signal} and the independence between
\(\mathbf{x}\) and \(\boldsymbol{\eta}\), we have
\[
\mathbb{E}
\left[
\mathbf{x}\mathbf{y}_a^{\mathrm{H}}
\right]
=
\boldsymbol{\Gamma}_x
\left(
\mathbf{S}_a^{\ast}
\right)^{\mathrm{H}},
\qquad
\mathbb{E}
\left[
\mathbf{y}_a\mathbf{y}_a^{\mathrm{H}}
\right]
=
\mathbf{S}_a^{\ast}
\boldsymbol{\Gamma}_x
\left(
\mathbf{S}_a^{\ast}
\right)^{\mathrm{H}}
+
\boldsymbol{\Gamma}_{\eta}.
\]
Substituting these covariance identities into the stochastic linear estimator gives
\eqref{eq:stochastic_unc_reconstruction} and \eqref{eq:stochastic_unc_hw}.
Applying \(\boldsymbol{\Pi}_a\) gives \eqref{eq:stochastic_pre_h} and
\eqref{eq:stochastic_pre_reconstruction}. This completes the proof.\qed

\section{Proof of Theorem~\ref{thm:fractional_reconstruction_advantage}}
\label{app:proof_fractional_reconstruction_advantage}

Substituting \eqref{eq:fractional_sampling_observation} into
\eqref{eq:fractional_ls_reconstruction}, and using
\(\mathbf{A}_a^{\dagger}\mathbf{A}_a=\mathbf{I}\), we obtain
\begin{align}
	\widetilde{\mathbf{x}}_a
	&=
	\mathbf{U}_{K,a}
	\mathbf{A}_a^{\dagger}
	\left(
	\mathbf{A}_a\mathbf{c}_a
	+
	\mathbf{S}_a^{\ast}\mathbf{r}_{K,a}
	+
	\boldsymbol{\eta}
	\right) \nonumber\\
	&=
	\mathbf{x}_{K,a}
	+
	\mathbf{U}_{K,a}
	\mathbf{A}_a^{\dagger}
	\mathbf{S}_a^{\ast}
	\mathbf{r}_{K,a}
	+
	\mathbf{U}_{K,a}
	\mathbf{A}_a^{\dagger}
	\boldsymbol{\eta}.
	\label{eq:app_reconstruction_expansion}
\end{align}
Thus, from
\(\mathbf{x}=\mathbf{x}_{K,a}+\mathbf{r}_{K,a}\), the reconstruction error is
\begin{equation}
	\mathbf{x}
	-
	\widetilde{\mathbf{x}}_a
	=
	\mathbf{r}_{K,a}
	-
	\mathbf{U}_{K,a}
	\mathbf{A}_a^{\dagger}
	\mathbf{S}_a^{\ast}
	\mathbf{r}_{K,a}
	-
	\mathbf{U}_{K,a}
	\mathbf{A}_a^{\dagger}
	\boldsymbol{\eta}.
	\label{eq:app_error_decomposition}
\end{equation}

Since
\(\mathbf{r}_{K,a}\in\mathcal{U}_{K,a}^{\perp}\), while
\(\mathbf{U}_{K,a}\mathbf{A}_a^{\dagger}\mathbf{S}_a^{\ast}\mathbf{r}_{K,a}\)
and
\(\mathbf{U}_{K,a}\mathbf{A}_a^{\dagger}\boldsymbol{\eta}\)
belong to \(\mathcal{U}_{K,a}\), the residual term is orthogonal to the two
low-dimensional error terms. Hence,
\begin{equation}
	\left\|
	\mathbf{x}
	-
	\widetilde{\mathbf{x}}_a
	\right\|_2^2
	=
	\left\|
	\mathbf{r}_{K,a}
	\right\|_2^2
	+
	\left\|
	\mathbf{A}_a^{\dagger}
	\mathbf{S}_a^{\ast}
	\mathbf{r}_{K,a}
	+
	\mathbf{A}_a^{\dagger}
	\boldsymbol{\eta}
	\right\|_2^2 .
	\label{eq:app_error_norm_decomposition}
\end{equation}
Taking expectation and using
\(\mathbb{E}[\boldsymbol{\eta}]=\mathbf{0}\) and
\(\mathbb{E}[\boldsymbol{\eta}\boldsymbol{\eta}^{\mathrm{H}}]
=
\sigma^2\mathbf{I}\), we have
\begin{align}
	\mathbb{E}
	\left[
	\left\|
	\mathbf{x}
	-
	\widetilde{\mathbf{x}}_a
	\right\|_2^2
	\right]
	&=
	\left\|
	\mathbf{r}_{K,a}
	\right\|_2^2
	+
	\left\|
	\mathbf{A}_a^{\dagger}
	\mathbf{S}_a^{\ast}
	\mathbf{r}_{K,a}
	\right\|_2^2
	+
	\sigma^2
	\left\|
	\mathbf{A}_a^{\dagger}
	\right\|_F^2 .
	\label{eq:app_expected_error}
\end{align}

Moreover, since
\(\mathbf{r}_{K,a}
=
(\mathbf{I}-\mathbf{P}_{K,a})\mathbf{r}_{K,a}\), the residual leakage term satisfies
\begin{equation}
	\left\|
	\mathbf{A}_a^{\dagger}
	\mathbf{S}_a^{\ast}
	\mathbf{r}_{K,a}
	\right\|_2
	\leq
	\left\|
	\mathbf{A}_a^{\dagger}
	\mathbf{S}_a^{\ast}
	(\mathbf{I}-\mathbf{P}_{K,a})
	\right\|_2
	\left\|
	\mathbf{r}_{K,a}
	\right\|_2
	=
	\tau_a
	\left\|
	\mathbf{r}_{K,a}
	\right\|_2 .
	\label{eq:app_leakage_bound}
\end{equation}
By the definition
\(\nu_a=\sigma^2\|\mathbf{A}_a^{\dagger}\|_F^2\), we further obtain
\begin{equation}
	\mathbb{E}
	\left[
	\left\|
	\mathbf{x}
	-
	\widetilde{\mathbf{x}}_a
	\right\|_2^2
	\right]
	\leq
	\left(
	1+\tau_a^2
	\right)
	\left\|
	\mathbf{r}_{K,a}
	\right\|_2^2
	+
	\nu_a .
	\label{eq:app_fractional_error_upper_bound}
\end{equation}

For the GFT domain reconstruction, since
\(\mathbf{P}_{K,1}\mathbf{x}\) is the best approximation of
\(\mathbf{x}\) in the \(K\)-dimensional GFT subspace, any reconstruction
\(\widetilde{\mathbf{x}}_1\) constrained to this subspace satisfies
\begin{equation}
	\mathbb{E}
	\left[
	\left\|
	\mathbf{x}
	-
	\widetilde{\mathbf{x}}_1
	\right\|_2^2
	\right]
	\geq
	\left\|
	\mathbf{x}
	-
	\mathbf{P}_{K,1}
	\mathbf{x}
	\right\|_2^2
	=
	\left\|
	\mathbf{r}_{K,1}
	\right\|_2^2 .
	\label{eq:app_gft_expected_lower_bound}
\end{equation}
Therefore, if the sufficient condition in
\eqref{eq:theorem_sufficient_condition} holds, then
\eqref{eq:app_fractional_error_upper_bound} and
\eqref{eq:app_gft_expected_lower_bound} imply
\[
\mathbb{E}
\left[
\left\|
\mathbf{x}
-
\widetilde{\mathbf{x}}_a
\right\|_2^2
\right]
<
\mathbb{E}
\left[
\left\|
\mathbf{x}
-
\widetilde{\mathbf{x}}_1
\right\|_2^2
\right],
\]
which is exactly the claim in
\eqref{eq:theorem_reconstruction_advantage}. This completes the proof.\qed
 
%

\bibliographystyle{IEEEtran}
\bibliography{references}

@article{1,
	title={Graph signal processing: Overview, challenges, and applications},
	author={Ortega, Antonio and Frossard, Pascal and Kova{\v{c}}evi{\'c}, Jelena and Moura, Jos{\'e} MF and Vandergheynst, Pierre},
	journal={Proceedings of the IEEE},
	volume={106},
	number={5},
	pages={808--828},
	year={2018},
	publisher={IEEE}
}

@article{2,
	title={Graph signal processing: History, development, impact, and outlook},
	author={Leus, Geert and Marques, Antonio G and Moura, Jos{\'e} MF and Ortega, Antonio and Shuman, David I},
	journal={IEEE Signal Processing Magazine},
	volume={40},
	number={4},
	pages={49--60},
	year={2023},
	publisher={IEEE}
}

@article{3,
	title={Discrete signal processing on graphs},
	author={Sandryhaila, Aliaksei and Moura, Jos{\'e} MF},
	journal={IEEE Transactions on Signal Processing},
	volume={61},
	number={7},
	pages={1644--1656},
	year={2013},
	publisher={IEEE}
}

@book{4,
	title={Introduction to Graph Signal Processing},
	author={Ortega, Antonio},
	year={2022},
	publisher={Cambridge University Press}
}

@article{5,
	title={Big data analysis with signal processing on graphs: Representation and processing of massive data sets with irregular structure},
	author={Sandryhaila, Aliaksei and Moura, Jos MF},
	journal={IEEE Signal Processing Magazine},
	volume={31},
	number={5},
	pages={80--90},
	year={2014},
	publisher={IEEE}
}

@article{6,
	title={The emerging field of signal processing on graphs: Extending high-dimensional data analysis to networks and other irregular domains},
	author={Shuman, David I and Narang, Sunil K and Frossard, Pascal and Ortega, Antonio and Vandergheynst, Pierre},
	journal={IEEE Signal Processing Magazine},
	volume={30},
	number={3},
	pages={83--98},
	year={2013},
	publisher={IEEE}
}

@article{7,
	title={Fast graph {Fourier} transforms based on graph symmetry and bipartition},
	author={Lu, Keng-Shih and Ortega, Antonio},
	journal={IEEE Transactions on Signal Processing},
	volume={67},
	number={18},
	pages={4855--4869},
	year={2019},
	publisher={IEEE}
}

@article{8,
	title={Graph {Fourier} transform: A stable approximation},
	author={Domingos, Joao and Moura, Jos{\'e} MF},
	journal={IEEE Transactions on Signal Processing},
	volume={68},
	pages={4422--4437},
	year={2020},
	publisher={IEEE}
}

@article{9,
	title={Generic reversible visible watermarking via regularized graph {Fourier} transform coding},
	author={Qi, Wenfa and Guo, Sirui and Hu, Wei},
	journal={IEEE Transactions on Image Processing},
	volume={31},
	pages={691--705},
	year={2021},
	publisher={IEEE}
}

@article{10,
	title={On the shift operator, graph frequency, and optimal filtering in graph signal processing},
	author={Gavili, Adnan and Zhang, Xiao-Ping},
	journal={IEEE Transactions on Signal Processing},
	volume={65},
	number={23},
	pages={6303--6318},
	year={2017},
	publisher={IEEE}
}

@article{11,
	title={Sampling, filtering and sparse approximations on combinatorial graphs},
	author={Pesenson, Isaac Z and Pesenson, Meyer Z},
	journal={Journal of Fourier Analysis and Applications},
	volume={16},
	number={6},
	pages={921--942},
	year={2010},
	publisher={Springer}
}

@article{12,
	title={Semi-supervised classification with graph convolutional networks},
	author={Kipf, Thomas N and Welling, Max},
	journal={arXiv preprint arXiv:1609.02907},
	year={2016}
}

@article{13,
	title={Inductive representation learning on large graphs},
	author={Hamilton, Will and Ying, Zhitao and Leskovec, Jure},
	journal={Advances in Neural Information Processing Systems},
	volume={30},
	year={2017}
}

@article{14,
	title={Sampling signals on graphs: From theory to applications},
	author={Tanaka, Yuichi and Eldar, Yonina C and Ortega, Antonio and Cheung, Gene},
	journal={IEEE Signal Processing Magazine},
	volume={37},
	number={6},
	pages={14--30},
	year={2020},
	publisher={IEEE}
}

@article{15,
	title={Discrete signal processing on graphs: Sampling theory},
	author={Chen, Siheng and Varma, Rohan and Sandryhaila, Aliaksei and Kova{\v{c}}evi{\'c}, Jelena},
	journal={IEEE Transactions on Signal Processing},
	volume={63},
	number={24},
	pages={6510--6523},
	year={2015},
	publisher={IEEE}
}

@article{16,
	title={Efficient sampling set selection for bandlimited graph signals using graph spectral proxies},
	author={Anis, Aamir and Gadde, Akshay and Ortega, Antonio},
	journal={IEEE Transactions on Signal Processing},
	volume={64},
	number={14},
	pages={3775--3789},
	year={2016},
	publisher={IEEE}
}

@article{17,
	title={Sampling of graph signals with successive local aggregations},
	author={Marques, Antonio G and Segarra, Santiago and Leus, Geert and Ribeiro, Alejandro},
	journal={IEEE Transactions on Signal Processing},
	volume={64},
	number={7},
	pages={1832--1843},
	year={2015},
	publisher={IEEE}
}

@article{18,
	title={Sampling in {Paley-Wiener} spaces on combinatorial graphs},
	author={Pesenson, Isaac},
	journal={Transactions of the American Mathematical Society},
	volume={360},
	number={10},
	pages={5603--5627},
	year={2008}
}

@article{19,
	title={Graph signal processing: Modulation, convolution, and sampling},
	author={Shi, John and Moura, Jose MF},
	journal={arXiv preprint arXiv:1912.06762},
	year={2019}
}

@inproceedings{22,
	title={Recovery of missing sensor data by reconstructing time-varying graph signals},
	author={Mondal, Anindya and Das, Mayukhmali and Chatterjee, Aditi and Venkateswaran, Palaniandavar},
	booktitle={2022 30th European Signal Processing Conference},
	pages={2181--2185},
	year={2022},
	organization={IEEE}
}

@article{23,
	title={Graph signal reconstruction techniques for iot air pollution monitoring platforms},
	author={Ferrer-Cid, Pau and Barcelo-Ordinas, Jose M and Garcia-Vidal, Jorge},
	journal={IEEE Internet of Things Journal},
	volume={9},
	number={24},
	pages={25350--25362},
	year={2022},
	publisher={IEEE}
}

@inproceedings{24,
	title={Piecewise stationary modeling of random processes over graphs with an application to traffic prediction},
	author={Hasanzadeh, Arman and Liu, Xi and Duffield, Nick and Narayanan, Krishna R},
	booktitle={2019 IEEE International Conference on Big Data},
    year= {2019},
    pages= {3779--3788}
}

@article{20,
	title={Spectral domain sampling of graph signals},
	author={Tanaka, Yuichi},
	journal={IEEE Transactions on Signal Processing},
	volume={66},
	number={14},
	pages={3752--3767},
	year={2018},
	publisher={IEEE}
}

@article{21,
	title={Two-channel critically sampled graph filter banks with spectral domain sampling},
	author={Sakiyama, Akie and Watanabe, Kana and Tanaka, Yuichi and Ortega, Antonio},
	journal={IEEE Transactions on Signal Processing},
	volume={67},
	number={6},
	pages={1447--1460},
	year={2019},
	publisher={IEEE}
}

@inproceedings{25,
	title={The fractional {Fourier} transform on graphs: Sampling and recovery},
	author={Wang, Yiqian and Li, Bingzhao},
	booktitle={2018 14th IEEE International Conference on Signal Processing},
	pages={1103--1108},
	year={2018},
}

@article{26,
	title={Optimal fractional {Fourier} filtering for graph signals},
	author={Ozturk, Cuneyd and Ozaktas, Haldun M and Gezici, Sinan and Ko{\c{c}}, Aykut},
	journal={IEEE Transactions on Signal Processing},
	volume={69},
	pages={2902--2912},
	year={2021},
	publisher={IEEE}
}

@article{27,
	title={Graph fractional {Fourier} transform: A unified theory},
	author={Alika{\c{s}}ifo{\u{g}}lu, Tuna and Kartal, B{\"u}nyamin and Ko{\c{c}}, Aykut},
	journal={IEEE Transactions on Signal Processing},
	volume={72},
	pages={3834--3850},
	year={2024},
	publisher={IEEE}
}

@article{28,
	title={Joint time-vertex fractional {Fourier} transform},
	author={Alika{\c{s}}ifo{\u{g}}lu, Tuna and Kartal, B{\"u}nyamin and {\"O}zg{\"u}nay, Eray and Ko{\c{c}}, Aykut},
	journal={Signal Processing},
	volume={233},
	pages={109944},
	year={2025},
	publisher={Elsevier}
}

@article{29,
	title={{FGFRFT}: Fast Graph Fractional {Fourier} Transform via Exact Spectral Splitting and {Fourier}-Series Approximation},
	author={Yan, Ziqi and Wang, Mingzhi and Shi, Sen and Zhao, Feiyue and Cui, Manjun and He, Yangfan and Zhang, Zhichao},
	journal={arXiv preprint arXiv:2602.20870},
	year={2026}
}

@article{291,
	title={Spectral graph fractional {Fourier} transform for directed graphs and its application},
	author={Yan, Fang-Jia and Li, Bing-Zhao},
	journal={Signal Processing},
	volume={210},
	pages={109099},
	year={2023},
	publisher={Elsevier}
}

@article{292,
	title={The graph fractional {Fourier} transform in {Hilbert} space},
	author={Zhang, Yu and Li, Bing-Zhao},
	journal={IEEE Transactions on Signal and Information Processing over Networks},
	year={2025},
	volume={11},
	number={},
	pages={242-257},
	publisher={IEEE}
}

@article{30,
	title={The Shannon sampling theorem—Its various extensions and applications: A tutorial review},
	author={Jerri, Abdul J},
	journal={Proceedings of the IEEE},
	volume={65},
	number={11},
	pages={1565--1596},
	year={1977},
	publisher={IEEE}
}

@book{31,
	title={Sampling Theory: Beyond Bandlimited Systems},
	author={Eldar, Yonina C},
	year={2015},
	publisher={Cambridge University Press}
}

@article{32,
	title={Beyond bandlimited sampling},
	author={Eldar, Yonina C and Michaeli, Tomer},
	journal={IEEE Signal Processing Magazine},
	volume={26},
	number={3},
	pages={48--68},
	year={2009},
	publisher={IEEE}
}

@article{33,
	author  = {Yamashita, Keitaro and Naganuma, Kazuki and Ono, Shunsuke},
	title   = {Generalized graph signal sampling by difference-of-convex optimization},
	journal = {arXiv preprint arXiv:2306.14634},
	year    = {2023}
}

@article{34,
	title={Generalized sampling on graphs with subspace and smoothness priors},
	author={Tanaka, Yuichi and Eldar, Yonina C},
	journal={IEEE Transactions on Signal Processing},
	volume={68},
	pages={2272--2286},
	year={2020},
	publisher={IEEE}
}

@article{35,
	title={Sampling Method for Generalized Graph Signals with Pre-selected Vertices via DC Optimization},
	author={Yamashita, Keitaro and Naganuma, Kazuki and Ono, Shunsuke},
	journal={IEEE Open Journal of Signal Processing},
	year={2026},
	volume={7},
	number={},
	pages={314-323},
	publisher={IEEE}
}

@inproceedings{36,
	title={Generalized graph signal sampling and reconstruction},
	author={Wang, Xiaohan and Chen, Jiaxuan and Gu, Yuantao},
	booktitle={2015 IEEE Global Conference on Signal and Information Processing},
	pages={567--571},
	year={2015},
}

@article{37,
	title={Nonideal sampling and interpolation from noisy observations in shift-invariant spaces},
	author={Eldar, Yonina C and Unser, Michael},
	journal={IEEE Transactions on Signal Processing},
	volume={54},
	number={7},
	pages={2636--2651},
	year={2006},
	publisher={IEEE}
}

@article{38,
	title={A minimum squared-error framework for generalized sampling},
	author={Eldar, Yonina C and Dvorkind, Tsvi G},
	journal={IEEE Transactions on Signal Processing},
	volume={54},
	number={6},
	pages={2155--2167},
	year={2006},
	publisher={IEEE}
}

@article{39,
	title={Graph signal sampling under stochastic priors},
	author={Hara, Junya and Tanaka, Yuichi and Eldar, Yonina C},
	journal={IEEE Transactions on Signal Processing},
	volume={71},
	pages={1421--1434},
	year={2023},
	publisher={IEEE}
}

@article{40,
	title={Generalized sampling of graph signals with the prior information based on graph fractional {Fourier} transform},
	author={Wei, Deyun and Yan, Zhenyang},
	journal={Signal Processing},
	volume={214},
	pages={109263},
	year={2024},
	publisher={Elsevier}
}

@article{41,
	title={Signal recovery on graphs: Variation minimization},
	author={Chen, Siheng and Sandryhaila, Aliaksei and Moura, Jos{\'e} MF and Kova{\v{c}}evi{\'c}, Jelena},
	journal={IEEE Transactions on Signal Processing},
	volume={63},
	number={17},
	pages={4609--4624},
	year={2015},
	publisher={IEEE}
}

@article{42,
	title={Kernel based reconstruction for generalized graph signal processing},
	author={Jian, Xingchao and Tay, Wee Peng and Eldar, Yonina C},
	journal={IEEE Transactions on Signal Processing},
	volume={72},
	pages={2308--2322},
	year={2024},
	publisher={IEEE}
}

@article{43,
	title={Diffusion convolutional recurrent neural network: Data-driven traffic forecasting},
	author={Li, Yaguang and Yu, Rose and Shahabi, Cyrus and Liu, Yan},
	journal={arXiv preprint arXiv:1707.01926},
	year={2017}
}

\end{document}